\documentclass[12pt]{article} %***
\usepackage[sectionbib]{natbib}
\usepackage{array,epsfig,fancyheadings,rotating}
\usepackage[]{hyperref}  %<----modified by Ivan
%%%%%%%%%%%%%%%%%%%%%%%%%%%%%%%%%%%%
\usepackage{sectsty, secdot}
%\sectionfont{\fontsize{12}{15}\selectfont}
\sectionfont{\fontsize{12}{14pt plus.8pt minus .6pt}\selectfont}
\renewcommand{\theequation}{\thesection\arabic{equation}}
\subsectionfont{\fontsize{12}{14pt plus.8pt minus .6pt}\selectfont}
%%%%%%%%%%%%%%%%%%%%%%%%%%%%%%%%%%%%%%%%%%%%%%%%%%%%%%%%%%%%%%%%%%%%%%%%%%%%%%%%%%%%%%%%

\textwidth=31.9pc
\textheight=46.5pc
\oddsidemargin=1pc
\evensidemargin=1pc
\headsep=15pt
\topmargin=.6cm
\parindent=1.7pc
\parskip=0pt

\usepackage{amsthm}
\usepackage{pdflscape}

\usepackage{subfigure}
\usepackage{graphicx}
\usepackage{amssymb,amstext,amsmath}
\usepackage{float}
\usepackage{bbm}
\usepackage{bm}
\usepackage{multirow}
\usepackage{makecell}
\usepackage{color}
\usepackage{verbatim}

\newcommand{\alphab}{\mathbf{\alpha}}

\newcommand{\x}{{\bf x}}
\newcommand{\X}{{\bf X}}

\newtheorem{theorem}{Theorem}

\newtheorem{corolario}{\em Corollary}

\newtheorem{nota}{\em Remark}
\def\Item$#1${\item $\displaystyle#1$
\hfill\refstepcounter{equation}(\theequation)}

%\usepackage{url} % not crucial - just used below for the URL 

%\pdfminorversion=4
% NOTE: To produce blinded version, replace "0" with "1" below.

% DON'T change margins - should be 1 inch all around.
\addtolength{\oddsidemargin}{-.5in}%
\addtolength{\evensidemargin}{-.5in}%
\addtolength{\textwidth}{1in}%
\addtolength{\textheight}{1.3in}%
\addtolength{\topmargin}{-.8in}%

% \usepackage{amsmath}
% \usepackage{amssymb}
% \usepackage{amsfonts}
% \usepackage{multirow}
% \usepackage{amsthm}

% \setcounter{page}{1}
% \newtheorem{theorem}{Theorem}
% \newtheorem{lemma}{Lemma}
% \newtheorem{corollary}{Corollary}
% \newtheorem{proposition}{Proposition}
% \theoremstyle{definition}
% \newtheorem{definition}{Definition}
% %\newtheorem{proof}{Proof}
% \newtheorem{example}{Example}
% \newtheorem{remark}{Remark}
% \pagestyle{fancy}

%%%%%%%%%%%%%%%%%%%%%%%%%%%%%%%%%%%%%%%%%%%%%%%%%%%%%%%%%%%%%%%%%%%%%%%%%%%%%%%%%%%%%%%%%%%%%%%%%%%%%%%%%%%%%%%%%%%%%%%%%%%%
\pagestyle{fancy}

\lhead[\fancyplain{} \leftmark]{}
\chead[]{}
\rhead[]{\fancyplain{}\rightmark}
\cfoot{}
%\headrulewidth=0pt  %<-modified by Ivan

%%%%%%%%%%%%%%%%%%%%%%%%%%%%%%%%%%%%%%%%%%%%%%%%%%%%%%%%%%%%%%%%%%%%%%%%%%%%%%%%%%%%%%%%%%%%%%%%%%%%%%%%%%%%%%%%%%%%%%%%%%%%
%%%%%%%%%%%%%%%%%%%%%%%%%%%%%%%%%%%%%%%%%%%%%%%%%%%%%%%%%%%%%%%%%%%%%%%%%%%%%%%%%%%%%%%%%%%%%%%%%%%%%%%%%%%%%%%%%%%%%%%%%%%%

\begin{document}

%%%%%%%%%%%%%%%%%%%%%%%%%%%%%%%%%%%%%%%%%%%%%%%%%%%%%%%%%%%%%%%%%%%%%%%%%%%%%%%%%%%%%%%%%%%%%%%%%%%%%%%%%%%%%%%%%%%%%%%%%%%%
%%%%%%%%%%%%%%%%%%%%%%%%%%%%%%%%%%%%%%%%%%%%%%%%%%%%%%%%%%%%%%%%%%%%%%%%%%%%%%%%%%%%%%%%%%%%%%%%%%%%%%%%%%%%%%%%%%%%%%%%%%%%

\renewcommand{\baselinestretch}{2}

\markright{ \hbox{\footnotesize\rm 
%{\footnotesize\bf 24} (201?), 000-000
}\hfill\\[-13pt]
\hbox{\footnotesize\rm
%\href{http://dx.doi.org/10.5705/ss.20??.???}{doi:http://dx.doi.org/10.5705/ss.20??.???}
}\hfill }

\markboth{\hfill{\footnotesize\rm MATEU SBERT AND V\'ICTOR ELVIRA} \hfill}
{\hfill {\footnotesize\rm FILL IN A SHORT RUNNING TITLE} \hfill}

\renewcommand{\thefootnote}{}
$\ $\par

%%%%%%%%%%%%%%%%%%%%%%%%%%%%%%%%%%%%%%%%%%%%%%%%%%%%%%%%%%%%%%%%%%%%%%%%%%%%%%%%%%%%%%%%%%%%%%%%%%%%%%%%%%%%%%%%%%%%%%%%%%%%

\fontsize{12}{14pt plus.8pt minus .6pt}\selectfont \vspace{0.8pc}
\centerline{\large\bf  GENERALIZING THE BALANCE HEURISTIC ESTIMATOR }
\vspace{2pt} \centerline{\large\bf IN MULTIPLE IMPORTANCE SAMPLING}
%\centerline{\large\bf  Generalizing the Balance Heuristic Estimator }
%\vspace{2pt} \centerline{\large\bf in Multiple Importance Sampling}
\vspace{.3cm} \centerline{Mateu Sbert$^\dagger$ and V\'ictor Elvira$^*$} \vspace{.2cm} %\centerline{\it \small{$^\dagger$College of Intelligence and Computing, Tianjin University, Tianjin, China} } 
 \vspace{0cm} \centerline{\it
 \small{$^\dagger$Informatics and Applications Institute Girona University, Girona, 17003, Spain} }\vspace{0cm}  \centerline{\it
\small{$^*$IMT Lille Douai \& CRIStAL Laboratory (UMR 9189), Lille, 59653, France} }\vspace{.55cm} \fontsize{9}{11.5pt plus.8pt minus
.6pt}\selectfont

%%%%%%%%%%%%%%%%%%%%%%%%%%%%%%%%%%%%%%%%%%%%%%%%%%%%%%%%%%%%%%%%%%%%%%%%%%%%%%%%%%%%%%%%%%%%%%%%%%%%%%%%%%%%%%%%%%%%%%%%%%%%

\begin{quotation}
\noindent {\it Abstract:} %\\
In this paper, we propose a novel and generic family of multiple importance sampling estimators. We first revisit the celebrated balance heuristic estimator, a widely used Monte Carlo technique for the approximation of intractable integrals. Then, we establish a generalized framework for the combination of samples simulated from multiple proposals. We show that the novel framework contains the balance heuristic as a particular case. In addition, we study the optimal choice of the free parameters in such a way the variance of the resulting estimator is minimized. A theoretical variance study shows the optimal solution is always better than the balance heuristic estimator (except in degenerate cases where both are the same). As a side result of this analysis, we also provide new upper bounds for the balance heuristic estimator. Finally, we show the gap in the variance of both estimators by means of five numerical examples.

\vspace{9pt}
\noindent {\it Key words:}
 Monte Carlo, importance sampling, balance heuristic, variance reduction. 
\par
\end{quotation}\par

\def\thefigure{\arabic{figure}}
\def\thetable{\arabic{table}}

\renewcommand{\theequation}{\thesection.\arabic{equation}}

\fontsize{12}{14pt plus.8pt minus .6pt}\selectfont

\setcounter{section}{0} %***
\setcounter{equation}{0} %-1

\lhead[\footnotesize\thepage\fancyplain{}\leftmark]{}\rhead[]{\fancyplain{}\rightmark\footnotesize}
%\thepage}%Put this line in Page 2

%%%%%%%%%%%%%%%%%%%%%%%%%%%%%%%%%%%%%%%
\section{Introduction}
\label{sec:intro}
%%%%%%%%%%%%%%%%%%%%%%%%%%%%%%%%%%%%%%%

Multiple importance sampling (MIS) is a Monte Carlo technique widely used in the literature of signal processing, computational statistics, and computer graphics for approximating complicated integrals. In its basic configuration, it works by drawing random samples from several proposal distributions (also called techniques) and weighting them appropriately in such a way an estimator built with the pairs of weighted samples is consistent. Since, the publication of \cite{Veach:1995:MIS}, the celebrated \emph{balance heuristic} estimator has been extensively used in the Monte Carlo literature, with an unprecedented success in the computer graphics industry.\footnote{Eric Veach has been awarded with several prizes because of his contributions in the MIS literature, where the Balance Heuristic is arguably the most relevant one.} In the balance heuristic method, different samples are simulated from each proposal and the traditional IS weight is assigned to each of them. Unlike the standard IS estimator, all the weighted samples are combined with an extra weighting, in such a way the resulting estimator typically shows a reduced variance. Its superiority in terms of variance w.r.t. other traditional combination schemes has been  recently shown in \cite{elvira2019generalized}, where a framework is established for sampling and weighting in MIS under equal number of samples per technique.
The balance heuristic, also called \emph{deterministic mixture} (\cite{Owen98safeand}), has been widely used in the literature of MIS. 
%In \cite{2015arXiv151103095E}, the balance heuristic is proved to  
Further efficient variance reduction techniques are proposed in \cite{ElviraMLB15,elvira2016multiple,elvira2016heretical} also in the context of MIS, still with equal counts from each technique. Provably better estimators
(\cite{SbertHavranSzirmay2016}) and heuristically better ones
(\cite{vrcaiHavranSbert14,SbertHavran2017}) have been
presented that use a different count of samples than equal count for all techniques. In~\cite{SbertHavranSzirmayElvira2018} it has been shown the relationship of a better count of samples with generalized weighted means. The balance heuristic is also present in most of successful adaptive IS (AIS) methods, see \cite{Cappe04,CORNUET12,martino2017layered,elvira2017improving,bugallo2017adaptive}, in particular in the case where all techniques are used to simulate the same number of samples. 

Interestingly, the balance heuristic has two properties in the assigned weights. First, all techniques appear at the denominator of the weight of a specific technique. Second, they appear in a form of a mixture, with coefficients proportional to the number of samples simulated from each technique. In this paper, we relax this constraint providing a generalized weighting/combining family of estimators that has the balance heuristic as a particular case. First, we show that it is possible to use a specific set of coefficients to decide the amount of samples per technique, and a different set of  coefficients to be applied as the importance weight. Second, we study four different cases fixing some of these coefficients (sampling and/or weighting), and we give the optimal solution for the rest of coefficients in such a way the variance of the MIS estimator is minimized. Note that, the novel estimator always outperforms the balance heuristic under the optimal choice of those coefficients. Third, we complete the theoretical work with three new upper bounds for the variance of the balance heuristic estimator. In five numerical examples we show that, under an adequate choice of parameters, the novel estimator outperforms the celebrated balance heuristic. These examples are also an interesting testbed for deepening in the study of the aforementioned novel upper bounds on the variance of the balance heuristic estimator. 

The rest of the paper is structured as follows. Section \ref{balanceheuristic} revisits the balance heuristic estimator. In Section \ref{anewfamily}, we propose the new family of estimators that generalizes the balance heuristic. We address four cases of special interest, depending on the number of samples simulated from each technique. Finally, we conclude the paper with five numerical examples in Section \ref{1D examples} and some conclusions in Section \ref{sec:conclusion}.

\section{Balance heuristic estimator}
\label{balanceheuristic}
The goal in IS is usually the estimation of the
value of integral $\mu = \int f(\x) {\mathrm d}\x $. In MIS, $n_i$ samples, $\{ \X_{i,j} \}_{j=1}^{n_i}$, are simulated from a set of available probability density functions (pdfs), $\{ p_i \}_{i=1}^n$. The MIS estimator introduced by Veach and Guibas~\cite{Veach:1995:MIS} is given by
\begin{equation}
\label{MISestimator1}
Z = \sum_{i=1}^{n} \frac{1}{n_i} \sum_{j=1}^{n_i} w_i(\X_{i,j}) \frac{f(\X_{i,j})} {p_i(\X_{i,j})},
\end{equation}
where $w_i(\x)$ is a weight function associated to the $i$-th proposal that fulfills both following conditions. First, the weights must sum up to one in all points of the domain where the value of the function is different from zero, i.e., 
%\begin{equation}
%\label{MISestimator2}
$\sum_{i=1}^{n} w_i(\x) =1$, $\forall x$ where $f(\x)\neq 0$,
%\end{equation}
%\cred{
Second, for all $x$ where $p_i(\x) = 0$, then $w_i(\x) =0$.

The \emph{balance heuristic} estimator is a particular case of Eq. \eqref{MISestimator1} where the weight function is given by
\begin{equation}
\label{balanceheuristic1}
w_i(\x) = \frac{n_i p_i(\x)}{\sum_{k=1}^n n_k p_k(\x)},
\end{equation}
which can be written too as
\begin{equation}
\label{balanceheuristic1bis1}
w_i(\x) = \frac{\alpha_i p_i(\x)}{\sum_{k=1}^n \alpha_k p_k(\x)},
\end{equation}
where $n_i= \alpha_i N$. Then, the estimator in Eq. \eqref{MISestimator1} becomes the balance heuristic or deterministic mixture estimator given by
\begin{eqnarray}
\label{MISestimator1bis1}
F &=& \sum_{i=1}^{n} \frac{1}{n_i} \sum_{j=1}^{n_i}  \frac{\alpha_i f(\X_{i,j})}{\sum_{k=1}^n \alpha_k p_k(\X_{i,j})}\\
&=& \frac{1}{N} \sum_{i=1}^{n}  \sum_{j=1}^{n_i}  \frac{f(\X_{i,j})}{\sum_{k=1}^n \alpha_k p_k(\X_{i,j})}.
\end{eqnarray}
\subsection{Interpretation of $F$ and general notation of the paper}
\label{sec_notation}
Note that, some authors interpret $F$ in Eq.~\eqref{MISestimator1bis1} as an estimator where the $N$ samples are simulated from the mixture $\psi_{\alphab} = \sum_{k=1}^n \alpha_k p_k(\x)$, which appears in the denominator of all importance weight, even if the selection of the number of samples $n_i$ per technique is deterministic (see \cite[Appendix 1]{elvira2019generalized}). The $n_i$ are then random variables with expected values $\alpha_i N$. In our framework, we fix deterministically $n_i$, as in the balance heuristic estimator (\cite{Veach:1995:MIS}), which is also called deterministic mixture scheme (\cite{Owen98safeand}), as opposition to the case where all samples are simulated from the mixture $\psi_{\alphab}$. Deterministic mixture sampling can be seen as a Rao-blackwellization which reduces the variance in the sampling but also in the IS estimators~(\cite{Owen98safeand,elvira2019generalized}). The estimator with random number of samples is sometimes called one-sample MIS estimator, while the deterministic number of samples is sometimes denoted as  multi-sample MIS estimator (\cite{SbertHavranSzirmay2016,Veach:PhD}). In this paper, we refer the former as random mixture estimator and the later as deterministic mixture estimator. All estimators, unless the opposite is clearly stated, use a deterministic selection of the number of samples per technique. Moreover, those estimators with the superindex $^1$ are versions of a specific estimator where the number of samples is normalized to $1$, e.g. $F^{1}$. In otherwords, even if the estimators require that all the numbers of samples per technique are $n_i \in \mathbb{R}$, we use this normalized estimators to denote the variance normalized to $1$ sample, which simplifies the comparison across estimators (for $N$ total samples, the variance of the estimator would be just the variance of $F^{1}$ divided by $N$).
In Table~\ref{table:estimators} we show the naming convention used in this paper.
\begin{table*}
\def\arraystretch{1.2}
\caption{\label{table:estimators} Naming convention for the multiple importance sampling
  estimators in this paper. We will drop the superindex ${1}$ from
  primary estimators when not strictly necessary.}
\begin{center}
\scriptsize{
\begin{tabular}{|l|l|}
 \hline
$Z$ & Generic deterministic (multi-sample) MIS estimator \\
 $Z^1$ & Generic deterministic (multi-sample) MIS estimator normalized to one sample\\
  ${\cal{Z}}$ & Generic randomized (one-sample) MIS estimator \\
   ${\cal{Z}}^1$ & Generic randomized (one-sample) MIS estimator, for number of samples equal to 1 \\
 ${{Z}}_{l1},{\cal{Z}}_{l1}$ &Deterministic and randomized, respectively, optimal lineal combination when weights are fixed and constant for each technique\\
  ${{Z}}_{l2}, {\cal{Z}}_{l2}$ &Deterministic and randomized, respectively, optimal lineal combination when weights are equal to sampling proportions\\
 ${{Z}}_{l3}, {\cal{Z}}_{l3}$ &Deterministic and randomized, respectively, optimal lineal combination when sampling proportions are fixed\\
$F$ & Balance heuristic multi-sample MIS estimator \\
$F^1$ & Balance heuristic multi-sample MIS estimator normalized to one sample\\
${\cal{F}}$ & Generalized balance heuristic one-sample MIS estimator \\
${\cal{F}}^1$ & Generalized balance heuristic one-sample MIS estimator, for number of samples equal to 1  \\
%${\mathcal F}$ & Balance heuristic one-sample MIS estimator \\
$G$ & Generalized balance heuristic multi-sample MIS estimator \\
 \hline
\end{tabular}
}
\vspace*{2mm}
\end{center}
\end{table*}
\subsection{Rationale}
\label{preliminaries}
In \cite[Theorems 9.2 and 9.4]{Veach:PhD}, 
 the relationship between the variances of $Z^1$ and its randomized version is discussed (i.e., the version where instead of deterministically selecting $n_i$, all the samples are directly simulated from $\sum_{k=1}^n \alpha_k p_k(\x)$). The same result is obtained in \cite{elvira2019generalized} in a broader variance analysis of MIS estimators. In particular, it can be shown that 
\begin{eqnarray}
\label{differenceofZ1}
V[{\cal Z}^1]-V[Z^1]= \sum_i \alpha_i {\mu'}_i^2 - \mu^2, 
\end{eqnarray}
where
 \begin{eqnarray}
\label{differenceofZ2}
{\mu'}_i = \frac{1}{\alpha_i} \int w_i(\x) f(\x) d\x,
\end{eqnarray}
and thus
\begin{eqnarray}
\label{differenceofZ3}
\sum_i \alpha_i {\mu'}_i=\mu.
\end{eqnarray}
From Eq.~\eqref{differenceofZ1} (see Appendix A) we have that
\begin{eqnarray}
\label{differenceofZ5}
V[Z^1] \le V[{\cal Z}^1] ,
\end{eqnarray}
and equality only happens (apart from the case when both variances $V[Z^1],V[{\cal Z}^1]$ are zero) when for all $i$ all ${\mu'_i}$ are equal. One example is given by taking in Eq.~\eqref{differenceofZ2} for all $i$, $w_i(\x)=w_i$ constant and $\alpha_i= w_i$, see the estimators $Z_{l2}$ and ${\cal{Z}}_{l2}$ in Appendix C. For the particular case when $\alpha_i = 1/n$ (see Appendix A),
\begin{eqnarray}
\label{differenceofZ6}
V[Z^1]-V[{\cal Z}^1] \le (n-1) \mu^2. 
\end{eqnarray}
Veach also proved (\cite{Veach:PhD}), Theorem 9.4, that the optimal weights for ${\cal Z}$, this is, the weight functions $w_i(\x)$ that minimize $V[{\cal Z}]$, are the balance heuristic ones,  Eq.~\eqref{balanceheuristic1bis1}, and thus in the optimal case ${\cal Z} \equiv {\cal F}$, where ${\cal F}$ is the random mixture estimator. This is, for any estimator ${\cal Z}$, we have that using the same distribution of samples, taking into account also Eq. \eqref{differenceofZ5} (see also \cite{SbertHavranSzirmay2016}), it always holds that 
\begin{eqnarray}
\label{differenceofZ6}
V[{F}]\le V[{\cal F}] \le V[{\cal Z}].
\end{eqnarray}
Eq.~\ref{differenceofZ6} will be used in Section~\ref{case1} to find new upper bounds for $V[{F}]$. 
Also in \cite{Veach:PhD}, Theorem 9.2, it is proved that the estimator that optimizes the second moment of $Z^1$ estimator, this is, $V[Z^1]+ \sum_i \alpha_i {\mu'}_i^2$, is the balance heuristic estimator. Thus, it seems clear that for improvement we have to look for a deterministic estimator, that should be a generalization of balance heuristic mixture estimator $F$. This will be done in next section.

\section{Generalized Multiple Importance Sampling Balance Heuristic  estimator}
\label{anewfamily}
Let us consider the estimator of Eq. \eqref{MISestimator1bis1}, where we relax the dependence between the number of samples $n_i$, and the associated coefficient $\alpha_i$, i.e., now $n_i = \beta_i N$, $\beta_i > 0$, $\sum_{i=1}^n \beta_i =1$, where in general $\alpha_i \ne \beta_i$ (otherwise, we recover $F$). We now define the estimator
 \begin{eqnarray}
 G&=&  \sum_{i=1}^n  \frac{\alpha_i}{n_i} \sum_{j=1}^{n_i}   \frac{f(\X_{i,j})}{\sum_{k=1}^n \alpha_k p_k(\X_{i,j})}\\
  &=& \frac{1}{N} \sum_{i=1}^n  \frac{\alpha_i}{\beta_i} \sum_{j=1}^{n_i}   \frac{f(\X_{i,j})}{\sum_{k=1}^n \alpha_k p_k(\X_{i,j})}.
 \end{eqnarray}
 Note that $G$ is a particular case of $Z$, with weights $w_i = \frac{\alpha_i p_i(\x)}{\sum_{k=1}^n \alpha_k p_k(\X_{i,j})}$ in Eq. \eqref{MISestimator1}. {Note that the  balance heuristic $F$ is a particular case of $G$, i.e., in general we do not impose the restriction of $\alpha_i = \frac{n_i}{N}$.}
 
\begin{theorem} 
For any set of weights $\{\alpha_i \}_{i=1}^n$, such as $\sum_{i=1}^n \alpha_i = 1$ and any set of weights $\{\beta_i \}_{i=1}^n$, such as $\sum_{i=1}^n \beta_i = 1$, $G$ is an unbiased estimator of $\mu$.
\end{theorem}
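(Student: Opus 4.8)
The plan is to compute $E[G]$ directly and show it equals $\mu$. First I would apply linearity of expectation to the double sum in the second displayed expression for $G$, so that
\[
E[G] = \frac{1}{N}\sum_{i=1}^n \frac{\alpha_i}{\beta_i}\sum_{j=1}^{n_i} E\!\left[\frac{f(\X_{i,j})}{\sum_{k=1}^n \alpha_k p_k(\X_{i,j})}\right].
\]
Since $\X_{i,j}$ is simulated from $p_i$, each of these expectations equals the integral $\int \frac{f(\x)\,p_i(\x)}{\sum_{k=1}^n \alpha_k p_k(\x)}\,{\mathrm d}\x$, independent of $j$. This step needs only the standard importance-sampling support condition that $\sum_{k}\alpha_k p_k(\x)>0$ wherever $f(\x)\neq 0$ (automatic here since the $\alpha_k>0$ and the $p_k$ are densities, with the usual convention that the integrand is $0$ off the support), which makes the ratio well defined.

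Next I would perform the cancellation that is the actual content of the statement: the inner sum over $j$ contributes the factor $n_i = \beta_i N$, hence
\[
E[G] = \frac{1}{N}\sum_{i=1}^n \frac{\alpha_i}{\beta_i}\,n_i \int \frac{f(\x)\,p_i(\x)}{\sum_{k=1}^n \alpha_k p_k(\x)}\,{\mathrm d}\x = \sum_{i=1}^n \alpha_i \int \frac{f(\x)\,p_i(\x)}{\sum_{k=1}^n \alpha_k p_k(\x)}\,{\mathrm d}\x,
\]
because $\tfrac{1}{N}\cdot\tfrac{\alpha_i}{\beta_i}\cdot \beta_i N = \alpha_i$. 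Finally I would pull the finite sum inside the integral and use $\sum_{i=1}^n \alpha_i p_i(\x) = \sum_{k=1}^n \alpha_k p_k(\x)$, which collapses the fraction to $f(\x)$, leaving $E[G]=\int f(\x)\,{\mathrm d}\x = \mu$.

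Alternatively — and perhaps worth a remark — one can bypass this computation by observing that $G$ is exactly the estimator $Z$ of Eq.~\eqref{MISestimator1} with the weight functions $w_i(\x) = \alpha_i p_i(\x)/\sum_{k=1}^n \alpha_k p_k(\x)$, and that these satisfy the two admissibility conditions stated after Eq.~\eqref{MISestimator1}: they sum to one wherever the denominator is positive (in particular wherever $f(\x)\neq 0$), and $w_i(\x)=0$ whenever $p_i(\x)=0$; unbiasedness of $G$ then follows from that of $Z$. Either route makes clear that there is no real obstacle: the whole point is the bookkeeping identity $\tfrac{1}{N}\cdot\tfrac{\alpha_i}{\beta_i}\cdot n_i = \alpha_i$, which decouples the role of $\beta_i$ (the number of samples drawn per technique) from that of $\alpha_i$ (the mixture coefficients appearing in the denominator). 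The only point deserving care is the implicit regularity/support assumption guaranteeing existence of the integrals, exactly as for the ordinary balance heuristic $F$.
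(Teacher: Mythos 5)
Your proof is correct and follows essentially the same route as the paper's: linearity of expectation reduces everything to $E[G]=\sum_i \alpha_i \mu'_i$ with $\mu'_i=\int f(\x)p_i(\x)/\sum_k\alpha_k p_k(\x)\,{\mathrm d}\x$, and the mixture in the numerator then cancels the denominator to give $\mu$ (the paper organizes this via the decomposition $G=\sum_i\alpha_i G_i$ rather than your explicit $\tfrac{1}{N}\tfrac{\alpha_i}{\beta_i}n_i=\alpha_i$ bookkeeping, but the content is identical). Your added remarks on the support condition and on viewing $G$ as a particular $Z$ are consistent with the paper and raise no issues.
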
 
\begin{proof}
Note that the estimator can be rewritten as $G = \sum_{i=1}^n \alpha_i G_i$, where
 \begin{eqnarray}
\label{anewfamily01}
G_i &=&  \frac{1}{n_i} \sum_{j=1}^{n_i}  \frac{f(\X_{i,j}))}{\sum_{k=1}^n \alpha_k p_k(\X_{i,j}))}.
\end{eqnarray}
 Note also that $G$ depends of two sets of parameters, $\{\alpha_i  \}_{i=1}^n, \{\beta_i \}_{i=1}^n$. In the particular case where $\beta_i = \alpha_i, \forall i$, the estimator $G$ becomes $F$. Let us first consider the case with $n_i = 1$. Then,
 \begin{eqnarray}
\label{anewfamily3}
G_i' &=&  \frac{f(\x)}{\sum_{k=1}^n \alpha_k p_k(\x)},
\end{eqnarray}
with expectation
 \begin{eqnarray}
\label{anewfamily2}
E[G_i'] =\int \frac{f(\x) p_i(\x)}{\sum_{k=1}^n \alpha_k p_k(\x)}d\x \equiv  \mu'_i,
\end{eqnarray}
and variance
\begin{eqnarray}
\label{anewfamily5}
{\sigma'}_i^2 &=&   \int \frac{f^2(\x) p_i(\x)}{(\sum_{k=1}^n \alpha_k p_k(\x))^2} d\x -  ( {\mu'_i})^2.
\end{eqnarray}

 The estimator $G$ is unbiased, since
  \begin{eqnarray}
\label{anewfamily1}
E[G] &=& \sum_{i=1}^n \alpha_i \mu'_i  =  \sum_i \alpha_i \int \frac{f(\x) p_i(\x)}{\sum_{k=1}^n \alpha_k p_k(\x)} d\x  \\
&=& \int \frac{f(\x) \sum_{i=1}^n \alpha_i p_i(\x)}{\sum_{k=1}^n \alpha_k p_k(\x)} d\x  \\ \nonumber
&=&  \int {f(\x)} d\x \equiv \mu.
 \end{eqnarray}
 \end{proof}
 
The variance of $G$ is given by
 \begin{eqnarray}
\label{anewfamily6}
V [G] =V \left[ \sum_{i=1}^n \alpha_i G_i \right] =  \sum_{i=1}^n \alpha_i^2 V [G_i]= \sum_{i=1}^n \frac{ \alpha_i^2 {\sigma'}_i^2 }{n_i} .
 \end{eqnarray}
For the sake of the theoretical analysis, we define $G^{1}$, a normalized version of $G$ with $N=1$ (see Section \ref{sec_notation}), with variance
 \begin{eqnarray}
\label{anewfamily6bis}
V [G^1] = \sum_{i=1}^n \frac{ \alpha_i^2 {\sigma'}_i^2 }{\beta_i}.
 \end{eqnarray}
Next we study four special cases of estimator $G^1$. 
\begin{nota}
We could also consider the one-sample estimator $\cal{G}$, randomized version of $G$. However, $\cal{G}$ is a particular case of the general estimator $\cal{Z}$, and we have seen in Section~\ref{preliminaries} that the optimal case for $\cal{Z}$ is when $\cal{Z}\equiv\cal{F}$, thus it only makes sense to consider the extension $G$ of the multi-sample estimator $F$. 
\end{nota}
\subsection{Case 1: $\alpha_i = \beta_i$, $\forall i$ }
\label{case1}
In this particular case, the estimator $G$ reverts to $F$. The variance is 
    \begin{eqnarray}
\label{anewfamily9}
V[F^1] = \sum_{i=1}^n \alpha_i {\sigma'}_i^2,
  \end{eqnarray}
by simple substitution in Eq. \eqref{anewfamily6bis}. 
 We aim at finding the optimal $\{\alpha_i^*\}_{i=1}^n$ such the variance of Eq. \eqref{anewfamily9} is minimized.
\begin{theorem}
\label{theorem_variance_F}
The optimal estimator $F^*$ in terms of variance is achieved when the following expression is equal $\forall j\in\{1,..,n\}$, %Minimum variance condition is when for all $j$, the following values are equal
\begin{eqnarray}
\label{optimalFcondition}
  {\sigma'}_j^2 +2 {\mu'_j}^2
 - 2\sum_{i=1}^n \alpha_i^* {\mu'_i} \int \frac{f(\x) p_i(\x) p_j(\x)}{(\sum_{k=1}^n \alpha_k^* p_k(\x))^2}d\x.
\end{eqnarray}
\end{theorem}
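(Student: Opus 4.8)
The plan is to read Theorem~\ref{theorem_variance_F} as the first-order (Lagrange) optimality condition for the constrained problem of minimizing $V[F^1]=\sum_{i=1}^n \alpha_i {\sigma'}_i^2$ over the simplex $\{\alpha_i>0,\ \sum_{i=1}^n\alpha_i=1\}$, keeping in mind that every ${\sigma'}_i^2$ in Eq.~\eqref{anewfamily9} depends on the \emph{whole} vector $\alpha$ through the mixture $\psi_\alpha(\x)=\sum_{k=1}^n\alpha_k p_k(\x)$ appearing in the denominator of Eq.~\eqref{anewfamily5}. The first step I would take is the algebraic simplification
\[
V[F^1]=\sum_{i=1}^n \alpha_i {\sigma'}_i^2 = \int \frac{f^2(\x)}{\psi_\alpha(\x)}\,d\x \;-\; \sum_{i=1}^n \alpha_i ({\mu'_i})^2 ,
\]
which holds because $\sum_{i=1}^n \alpha_i p_i(\x)=\psi_\alpha(\x)$ telescopes the first block of terms (the same manipulation already used for unbiasedness in Eq.~\eqref{anewfamily1}).

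Next I would form the Lagrangian $\mathcal{L}(\alpha,\lambda)=V[F^1]-\lambda\big(\sum_{i=1}^n\alpha_i-1\big)$ and differentiate, using $\partial_{\alpha_j}\int f^2/\psi_\alpha\,d\x=-\int f^2 p_j/\psi_\alpha^2\,d\x$ and $\partial_{\alpha_j}\mu'_i=-\int f p_i p_j/\psi_\alpha^2\,d\x$ (differentiation under the integral sign). Collecting terms gives
\[
\frac{\partial V[F^1]}{\partial\alpha_j} = -\int\frac{f^2(\x)p_j(\x)}{\psi_\alpha^2(\x)}\,d\x - ({\mu'_j})^2 + 2\sum_{i=1}^n\alpha_i\mu'_i\int\frac{f(\x)p_i(\x)p_j(\x)}{\psi_\alpha^2(\x)}\,d\x ,
\]
and then I would substitute $\int f^2 p_j/\psi_\alpha^2\,d\x = {\sigma'}_j^2+({\mu'_j})^2$, which is just a rearrangement of Eq.~\eqref{anewfamily5}. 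The stationarity condition $\partial_{\alpha_j}\mathcal{L}=0$, i.e.\ $\partial_{\alpha_j}V[F^1]=\lambda$ for every $j$, then reads
\[
{\sigma'}_j^2 + 2({\mu'_j})^2 - 2\sum_{i=1}^n\alpha_i^*\,\mu'_i\int\frac{f(\x)p_i(\x)p_j(\x)}{\big(\sum_{k=1}^n\alpha_k^* p_k(\x)\big)^2}\,d\x = -\lambda ,
\]
so that the quantity displayed in Eq.~\eqref{optimalFcondition} takes the same value ($-\lambda$) for every $j$ at the optimal $\alpha^*$, which is exactly the claim.

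The differentiation above is routine; the delicate part, and the one I expect to be the main obstacle, is arguing that a point satisfying this condition is a genuine global minimizer rather than merely a critical point. A minimizer certainly exists (the simplex is compact and $V[F^1]$ is continuous), so the condition is at least necessary; for sufficiency I would note that the dominant term $\alpha\mapsto\int f^2/\psi_\alpha\,d\x$ is convex, since $t\mapsto 1/t$ is convex on $(0,\infty)$ and $f^2\ge0$, and then either verify the second-order condition or argue uniqueness of the critical point so that it must coincide with the minimizer. Two caveats I would flag explicitly: (i) one needs enough integrability, e.g.\ $\int f^2/p_i\,d\x<\infty$ for each $i$, to differentiate under the integral via dominated convergence; and (ii) if the minimizer lies on the boundary of the simplex, with some $\alpha_j^*=0$, the equality in Eq.~\eqref{optimalFcondition} must be weakened to the corresponding KKT inequality for those indices, so the statement is cleanest when stated for an interior optimum.
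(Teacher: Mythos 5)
Your proposal is correct and follows essentially the same route as the paper's Appendix B: a Lagrange-multiplier stationarity condition for $\sum_i \alpha_i {\sigma'}_i^2$ on the simplex, differentiation under the integral sign of ${\sigma'}_i^2$ and $\mu'_i$ with respect to $\alpha_j$, and the telescoping identity $\sum_i \alpha_i \int f^2 p_i p_j/\psi_\alpha^3\, d\x = {\sigma'}_j^2 + ({\mu'_j})^2$ (you merely perform the telescoping before differentiating rather than after, which is immaterial). Your added caveats about integrability, boundary (KKT) cases, and the distinction between a critical point and a global minimizer are not addressed in the paper's proof, which only establishes the first-order condition.
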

\begin{proof}
See Appendix B for a proof.
\end{proof}

\begin{theorem}
\label{theorem_bound_1}
If all the $n$ sampling techniques are unbiased, $V[F^1]$ is upper bounded by the following bounds
\begin{enumerate}
\item{\begin{equation}
\label{bound11}
{\cal{A}} (v_i;\alpha_i),
\end{equation}}
\item{\begin{equation} 
\label{bound21}
  {\cal{H}} (v_i;\alpha_i)  + \mu^2 \left( \frac{({\cal{H}}(v_i;\alpha_i))^2}{{\cal{H}}(v_i^2;\alpha_i)} -1 \right),  \end{equation} }
\item{  \begin{equation} 
\label{bound31}
 \left({\cal{H}} (\sqrt{v_i};\alpha_i)\right)^2  + 
\mu^2 \left( \frac{\left({\cal{H}} (\sqrt{v_i};\alpha_i)\right)^2}{{\cal{H}}(v_i;\alpha_i)} -1 \right), 
\end{equation}  }
\end{enumerate}
where ${\cal{A}} (v_i;\alpha_i)$ and $ {\cal{H}} (v_i;\alpha_i) $ are the arithmetic and harmonic weighted averages, respectively, and {$v_i$ denotes the variance of a single-proposal IS estimator with one sample simulated from technique $p_i$.}
\end{theorem}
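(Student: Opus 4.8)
The plan is to derive all three upper bounds from a single estimate on the quantity
\[
M:=\int \frac{f^2(\x)}{\sum_{k=1}^n \alpha_k p_k(\x)}\,d\x ,
\]
i.e.\ the integral governing the denominator‑mixture that already appears in $F$. The first step is a reformulation of the variance: plugging \eqref{anewfamily5} into \eqref{anewfamily9} and using $\sum_{i}\alpha_i p_i(\x)=\sum_{k}\alpha_k p_k(\x)$ gives $V[F^1]=M-\sum_{i=1}^n \alpha_i {\mu'_i}^2$, and since $\sum_i\alpha_i\mu'_i=\mu$ by \eqref{anewfamily1} while $t\mapsto t^2$ is convex, Jensen yields $\sum_i\alpha_i {\mu'_i}^2\ge \mu^2$, hence $V[F^1]\le M-\mu^2$ (equivalently, this is $V[F^1]\le V[{\cal F}^1]$ from \eqref{differenceofZ6}, the one‑sample random‑mixture estimator having second moment exactly $M$). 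Finally, the hypothesis that every technique is unbiased forces $p_k(\x)>0$ on $\{f\neq 0\}$, so each single‑proposal IS estimator is unbiased and $\int f^2(\x)/p_k(\x)\,d\x=v_k+\mu^2$ for every $k$. The whole problem thus reduces to bounding $M$ in terms of $\{v_k\}$ and $\{\alpha_k\}$.

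The key step — the only non‑routine one — is to re‑weight the mixture in the denominator before invoking convexity. Fix positive numbers $\phi_1,\dots,\phi_n$, set $C:=\sum_{j}\alpha_j/\phi_j$ and $\gamma_k:=(\alpha_k/\phi_k)/C>0$, so that $\sum_k\gamma_k=1$ and $\sum_k\alpha_k p_k(\x)=C\sum_k\gamma_k\bigl(\phi_k p_k(\x)\bigr)$. Applying Jensen's inequality to the convex map $t\mapsto 1/t$ with the probability vector $(\gamma_k)_k$ gives, pointwise on $\{f\neq 0\}$,
\[
\frac{1}{\sum_k\alpha_k p_k(\x)}\;\le\;\frac{1}{C}\sum_{k=1}^n \frac{\gamma_k}{\phi_k\,p_k(\x)}
\]
(and the inequality is trivial at points where some $p_k(\x)=0$, the right‑hand side being $+\infty$ there). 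Multiplying by $f^2(\x)$, integrating, and using $\int f^2/p_k=v_k+\mu^2$, one obtains
\[
M\;\le\;\frac{1}{C^2}\sum_{k=1}^n \frac{\alpha_k}{\phi_k^{2}}\bigl(v_k+\mu^2\bigr).
\]

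It then remains to specialize $\phi_k$ and combine with $V[F^1]\le M-\mu^2$. Taking $\phi_k\equiv 1$ gives $C=1$ and $M\le {\cal A}(v_i;\alpha_i)+\mu^2$, i.e.\ bound \eqref{bound11}. Taking $\phi_k=v_k$ gives $C=1/{\cal H}(v_i;\alpha_i)$, and the residual sums $\sum_k\alpha_k/v_k=C$ and $\sum_k\alpha_k/v_k^2=1/{\cal H}(v_i^2;\alpha_i)$ collapse the estimate to bound \eqref{bound21}; taking $\phi_k=\sqrt{v_k}$ gives $C=1/{\cal H}(\sqrt{v_i};\alpha_i)$, and $\sum_k\alpha_k=1$, $\sum_k\alpha_k/v_k=1/{\cal H}(v_i;\alpha_i)$ yield bound \eqref{bound31}. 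The only points needing care are the standing assumptions $v_i>0$ (so the tilted weights are well defined; the degenerate case $v_i=0$ being treated separately or by continuity) and the verification that the $\mu^2$‑correction factors are $\ge 1$, which is exactly Cauchy--Schwarz applied to $(\sqrt{\alpha_k})_k$ and $(\sqrt{\alpha_k}/v_k)_k$ for bound \eqref{bound21} (resp.\ to $(\sqrt{\alpha_k})_k$ and $(\sqrt{\alpha_k}/\sqrt{v_k})_k$ for bound \eqref{bound31}). I expect the single genuine obstacle to be spotting the tilt $\gamma_k\propto\alpha_k/\phi_k$: applying Jensen directly to $(\alpha_k)_k$ recovers only bound \eqref{bound11}, and reaching the sharper harmonic‑mean bounds \eqref{bound21}--\eqref{bound31} requires pre‑conditioning the mixture weights by $v_k^{-1}$ and $v_k^{-1/2}$ before the convexity step.
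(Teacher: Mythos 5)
Your proof is correct, and all three specializations ($\phi_k\equiv 1$, $\phi_k=v_k$, $\phi_k=\sqrt{v_k}$) reproduce the stated bounds exactly; the identity $V[F^1]=M-\sum_i\alpha_i{\mu'_i}^2$ and the step $\sum_i\alpha_i{\mu'_i}^2\ge\mu^2$ are also exactly right. The route is recognizably the same chain as the paper's, $V[F^1]\le V[{\cal F}^1]\le V[{\cal Z}^1]$ for a constant-weight one-sample estimator, but you arrive at it differently and more self-containedly. The paper constructs three explicit randomized linear-combination estimators ${\cal Z}_{l2}$, ${\cal Z}_{l1}$, ${\cal Z}_{l3}$ with constant weights $w_i\propto\alpha_i/v_i^t$ for $t=0,1,1/2$ (the latter two being the optimal linear combinations from Sbert and Havran, 2017), computes each variance exactly from Eq.~\eqref{newbound1} using $\int f^2/p_i=v_i+\mu^2$, and then invokes Veach's Theorem 9.4 (optimality of the balance-heuristic weights for the one-sample estimator) together with $V[F]\le V[{\cal F}]$. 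Your tilted-Jensen inequality $\bigl(\sum_k\alpha_kp_k\bigr)^{-1}\le C^{-2}\sum_k\alpha_k\phi_k^{-2}p_k^{-1}$ is precisely the pointwise statement that the constant-weight second moment dominates $M$, i.e.\ it re-proves the relevant special case of Veach's theorem rather than citing it; and keeping $\phi_k$ free until the end means you effectively obtain the paper's generalized bound (Theorem~\ref{theorem_generalized_bound_1}, with $\phi_k=v_k^t$) in one pass rather than as a separate afterthought. What the paper's presentation buys in exchange is the interpretation of each bound as the variance of a concrete, implementable estimator (so the bounds are attained by actual linear combinations, and the biased-technique variants of Theorem~\ref{theorem_bound_biased_1} follow by replacing $\mu^2$ with $\mu_i^2$ inside the sum); your argument yields the inequalities but not this attainability reading unless you unwind the Jensen step back into an estimator. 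Your two side remarks (handling $v_i=0$ by continuity, and the Cauchy--Schwarz check that the $\mu^2$ correction factors are nonnegative) are sensible but not needed for the bounds themselves.
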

\begin{proof}
The proofs for all bounds can be found in Appendix C. Note that the first bound was already introduced in \cite{SbertHavran2017}.
\end{proof}
From Eq.~\eqref{bound11} follows immediately
\begin{corolario}
$V[F^1]$ is upper bounded by $max_i\{v_i\}$.
\end{corolario}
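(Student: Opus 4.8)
The plan is to obtain the bound as an immediate consequence of the first inequality in Theorem~\ref{theorem_bound_1}, so that no fresh estimate is required. Recall that the arithmetic weighted average appearing in Eq.~\eqref{bound11} is
\[
  {\mathcal A}(v_i;\alpha_i) \;=\; \sum_{i=1}^n \alpha_i v_i ,
\]
where the $\alpha_i$ are positive and satisfy $\sum_{i=1}^n \alpha_i = 1$; hence it is a genuine convex combination of the per-technique variances $v_i$ (the variance of a single-proposal IS estimator with one sample from $p_i$).

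First I would invoke the elementary fact that any convex combination of real numbers is dominated by the largest of them: for weights $\alpha_i \ge 0$ with $\sum_{i=1}^n \alpha_i = 1$,
\[
  \sum_{i=1}^n \alpha_i v_i \;\le\; \sum_{i=1}^n \alpha_i \, \max_{1\le k\le n} v_k \;=\; \max_{1\le k\le n} v_k .
\]
Chaining this with the bound $V[F^1] \le {\mathcal A}(v_i;\alpha_i)$ supplied by Theorem~\ref{theorem_bound_1} immediately yields $V[F^1] \le \max_i\{v_i\}$, which is exactly the assertion.

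The argument has no genuine obstacle: all the substantive work is carried out in Theorem~\ref{theorem_bound_1} (proved in Appendix~C), and this corollary is merely the observation that a weighted mean never exceeds its largest entry. If one wishes, one may add that the inequality is tight precisely in the degenerate situation where $v_i$ takes a common value over every index $i$ with $\alpha_i > 0$; but this refinement is not needed for the stated result.
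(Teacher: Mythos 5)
Your proof is correct and matches the paper's own argument exactly: the paper also chains the bound $V[F^1]\le{\cal{A}}(v_i;\alpha_i)=\sum_i\alpha_i v_i$ from Theorem~\ref{theorem_bound_1} with the observation that a convex combination never exceeds its largest entry. Nothing further is needed.
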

\begin{proof}
\begin{equation}
\label{lessthanmaximumvi}
V[F^1] \le \sum_i \alpha_i v_i\le max_i\{v_i\} \sum_i \alpha_i = max_i\{v_i\}.
\end{equation}
\end{proof}
Observe that $\left({\cal{H}} (\sqrt{v_i};\alpha_i)\right)^2$ is the weighted power mean with power=-1/2. We remind that arithmetic and harmonic means are power means with power 1 and -1, respectively. Thus we have the inequalities
$${\cal{H}} (v_i;\alpha_i) \le \left({\cal{H}} (\sqrt{v_i};\alpha_i)\right)^2 \le {\cal{A}} (v_i;\alpha_i),
$$
but these inequalities do not have to hold for the bounds, as they contain additional terms.
For the case of biased techniques, we have the {following bounds}.
\begin{theorem}
\label{theorem_bound_biased_1}
{The three upper bounds for $V[F^1]$ hold:}
\begin{enumerate}
\item{\begin{equation}
\label{bound12}
{V[F^1] \leq} {\cal{A}} (v_i;\alpha_i)+ {\cal{A}} (\mu_i^2;\alpha_i)-\mu^2,
\end{equation}}
\item{\begin{equation} 
\label{bound22}  
{V[F^1] \leq} {\cal{H}} (v_i;\alpha_i)  + \frac{({\cal{H}}(v_i;\alpha_i))^2}{{\cal{H}}(\frac{v_i^2}{\mu_i^2};\alpha_i)} -\mu^2,  
\end{equation} }
\item{  \begin{equation}  
\label{bound32}
{V[F^1] \leq} \left({\cal{H}} (\sqrt{v_i};\alpha_i)\right)^2  + 
\frac{\left({\cal{H}} (\sqrt{v_i};\alpha_i)\right)^2}{{\cal{H}}(\frac{v_i}{\mu_i^2};\alpha_i)} -\mu^2.  
\end{equation}  }
where $\mu_i$ is the expected value of integral $\mu$ when sampling with technique $i$, and $\sum_i \alpha_i \mu_i =\mu$,
\end{enumerate}
\end{theorem}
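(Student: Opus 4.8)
The plan is to reduce the three statements to a single integral inequality and then read off the bounds by substituting three different sets of positive constants. First I would rewrite $V[F^1]$ as in Case~1: combining Eqs.~\eqref{anewfamily5} and~\eqref{anewfamily9} with $\beta_i=\alpha_i$ gives $V[F^1]=\int f^2(\x)\big/\big(\sum_k\alpha_k p_k(\x)\big)\,d\x-\sum_i\alpha_i{\mu'_i}^2$, and, since $\sum_i\alpha_i\mu'_i=\mu$ by Eq.~\eqref{differenceofZ3}, the Cauchy--Schwarz inequality yields $\sum_i\alpha_i{\mu'_i}^2\ge\mu^2$ and hence
\[
V[F^1]\le\int\frac{f^2(\x)}{\sum_{k=1}^n\alpha_k p_k(\x)}\,d\x-\mu^2=V[{\cal F}^1],
\]
which is also the instance $Z=F$ of Eq.~\eqref{differenceofZ5} (the chain $V[F]\le V[{\cal F}]\le V[{\cal Z}]$). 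So it suffices to bound from above the quantity $I:=\int f^2(\x)\big/\big(\sum_k\alpha_k p_k(\x)\big)\,d\x$ in terms of the single-technique second moments $m_i:=\int f^2(\x)/p_i(\x)\,d\x=v_i+\mu_i^2$.

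The key step is a pointwise Cauchy--Schwarz bound on the mixture density: for any positive constants $c_1,\dots,c_n$ and every $\x$ (the sums taken over the techniques with $p_k(\x)>0$), applying Cauchy--Schwarz to the pairs $\sqrt{\alpha_k p_k(\x)}$ and $\sqrt{\alpha_k}\,c_k/\sqrt{p_k(\x)}$ gives
\[
\Big(\sum_k\alpha_k c_k\Big)^2\le\Big(\sum_k\alpha_k p_k(\x)\Big)\Big(\sum_k\frac{\alpha_k c_k^2}{p_k(\x)}\Big),
\]
so that $\big(\sum_k\alpha_k p_k(\x)\big)^{-1}\le\big(\sum_k\alpha_k c_k^2/p_k(\x)\big)\big/\big(\sum_k\alpha_k c_k\big)^2$; integrating against $f^2$ and using $\sum_k\alpha_k=1$,
\[
I\le\frac{\sum_k\alpha_k c_k^2\,m_k}{\big(\sum_k\alpha_k c_k\big)^2}=\frac{\sum_k\alpha_k c_k^2 v_k+\sum_k\alpha_k c_k^2\mu_k^2}{\big(\sum_k\alpha_k c_k\big)^2}.
\]

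Now I would substitute the three choices. With $c_k\equiv1$ the bound becomes $I\le{\cal{A}}(v_i;\alpha_i)+{\cal{A}}(\mu_i^2;\alpha_i)$, and subtracting $\mu^2$ gives Eq.~\eqref{bound12}. With $c_k=1/v_k$ one has $\big(\sum_k\alpha_k c_k\big)^{-2}=\big({\cal{H}}(v_i;\alpha_i)\big)^2$, $\sum_k\alpha_k c_k^2 v_k=1/{\cal{H}}(v_i;\alpha_i)$ and $\sum_k\alpha_k c_k^2\mu_k^2=1/{\cal{H}}(v_i^2/\mu_i^2;\alpha_i)$, so $I\le{\cal{H}}(v_i;\alpha_i)+\big({\cal{H}}(v_i;\alpha_i)\big)^2\big/{\cal{H}}(v_i^2/\mu_i^2;\alpha_i)$ and subtracting $\mu^2$ gives Eq.~\eqref{bound22}. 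With $c_k=1/\sqrt{v_k}$ one has $\big(\sum_k\alpha_k c_k\big)^{-2}=\big({\cal{H}}(\sqrt{v_i};\alpha_i)\big)^2$, $\sum_k\alpha_k c_k^2 v_k=1$ and $\sum_k\alpha_k c_k^2\mu_k^2=1/{\cal{H}}(v_i/\mu_i^2;\alpha_i)$, so $I\le\big({\cal{H}}(\sqrt{v_i};\alpha_i)\big)^2+\big({\cal{H}}(\sqrt{v_i};\alpha_i)\big)^2\big/{\cal{H}}(v_i/\mu_i^2;\alpha_i)$ and subtracting $\mu^2$ gives Eq.~\eqref{bound32}. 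Setting $\mu_i=\mu$ everywhere recovers Theorem~\ref{theorem_bound_1}.

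The step I expect to be the main obstacle is the bookkeeping caused by the biased techniques. The Cauchy--Schwarz inequality above is, strictly, only over the active set $\{k:p_k(\x)>0\}$, so one must argue that extending the sums to $\{1,\dots,n\}$ and interpreting each $m_i$ as the restricted integral $\int_{\{p_i>0\}}f^2/p_i$ does not break the estimate; this is where the standing relation $\sum_i\alpha_i\mu_i=\mu$ and the finiteness of the $m_i$ come in. I would handle this by keeping the representation $V[{\cal F}^1]=\int f^2/\psi_{\alphab}-\mu^2$ — valid verbatim for biased techniques, since the balance-heuristic weights still sum to one wherever $f\neq0$ — and controlling the contribution of the points outside $\mathrm{supp}\,p_i$ by hand, after which only the weighted arithmetic/harmonic-mean algebra of the unbiased proof remains. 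A final consistency check is that each right-hand side is $\ge V[F^1]\ge0$, i.e.\ that subtracting $\mu^2$ does not overshoot; this again uses $\sum_i\alpha_i\mu_i=\mu$ together with ${\cal{A}},{\cal{H}}\ge0$.
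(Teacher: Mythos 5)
Your overall route is the same as the paper's: reduce to $V[F^1]\le V[{\cal F}^1]=\int f^2/\psi_{\alphab}\,d\x-\mu^2$ and then bound the second moment of the mixture by that of a constant-weight linear combination; your three choices $c_k\equiv 1$, $c_k=1/v_k$, $c_k=1/\sqrt{v_k}$ are exactly the paper's estimators ${\cal Z}_{l2}$, ${\cal Z}_{l1}$, ${\cal Z}_{l3}$ (with $w_i\propto\alpha_i c_i$), and your weighted-mean algebra reproduces the paper's computations correctly. The one presentational difference is that you make the key step explicit via a pointwise Cauchy--Schwarz instead of citing Veach's Theorem 9.4; in the unbiased setting of Theorem~\ref{theorem_bound_1} that is a clean, self-contained derivation.

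For the biased case --- which is precisely what Theorem~\ref{theorem_bound_biased_1} is about --- the step you flag as ``bookkeeping'' is a genuine gap, and the fix you sketch cannot work in the form stated. Cauchy--Schwarz over the active set $A(\x)=\{k:p_k(\x)>0\}$ gives
\begin{equation*}
\frac{1}{\sum_k\alpha_k p_k(\x)}\;\le\;\frac{\sum_{k\in A(\x)}\alpha_k c_k^2/p_k(\x)}{\bigl(\sum_{k\in A(\x)}\alpha_k c_k\bigr)^2},
\end{equation*}
and replacing the denominator by the \emph{larger} quantity $\bigl(\sum_{k=1}^n\alpha_k c_k\bigr)^2$ moves the bound in the wrong direction, so the inequality you integrate does not follow. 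It genuinely fails pointwise: with two techniques, $\alpha_1=\alpha_2=1/2$, $c_k\equiv1$, at a point where $p_1=1$, $p_2=0$, $f\neq0$, the left side is $2$ and your right side is $1/2$. No local ``control of the points outside $\mathrm{supp}\,p_i$'' can repair this, because the deficit at such points can be made arbitrarily large; only the global hypothesis $\sum_i\alpha_i\mu_i=\mu$ (or, in the paper's version, $\sum_i w_i\mu_i=\mu$ for the weights of each linear combination) could conceivably rescue the \emph{integrated} inequality, and your argument never uses it at this stage. The paper avoids stating any pointwise bound here: it computes the variance of the actual linear-combination estimator ${\cal Z}_{l}$, whose second moment is $\sum_i(w_i^2/\alpha_i)(v_i+\mu_i^2)$ by construction, imposes $\sum_i w_i\mu_i=\mu$ so that subtracting $\mu^2$ yields a variance, and then invokes the optimality chain $V[F]\le V[{\cal F}]\le V[{\cal Z}_l]$. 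To complete your proof you would need to either adopt that estimator-level comparison or prove the integrated inequality directly using the unbiasedness constraint; as written, the central inequality of your argument is false in the regime the theorem addresses.
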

\begin{proof}
The proofs for all bounds can be found in Appendix C.
\end{proof}

 {The next two theorems generalize Theorems~\ref{theorem_bound_1} and \ref{theorem_bound_biased_1}. The proofs can be found in Appendix C.}

\begin{theorem}
\label{theorem_generalized_bound_1}
For any $t$, if all the $n$ sampling techniques are unbiased, the variance of $F^1$ is upper bounded as
\begin{equation}
\label{generalized_bound}
 V[F^1] \leq \frac{\left({\cal{H}} ({v_i}^t;\alpha_i)\right)^2}{{\cal{H}} ({v_i}^{2t-1};\alpha_i)}  + 
\mu^2 \left( \frac{\left({\cal{H}} ({v_i}^t;\alpha_i)\right)^2}{{\cal{H}}(v_i^{2t};\alpha_i)} -1 \right).
\end{equation}
\end{theorem}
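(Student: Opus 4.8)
The plan is to collapse $V[F^1]$ to a single integral and then apply a pointwise Cauchy--Schwarz inequality inside that integral, with the free Cauchy--Schwarz weights chosen so that the weighted power means on the right-hand side of \eqref{generalized_bound} emerge. First I would use \eqref{anewfamily9} together with \eqref{anewfamily2}, \eqref{anewfamily5} and the identity $\sum_{i=1}^n \alpha_i p_i(\x) = \psi_{\alphab}(\x)$ to rewrite
\begin{equation}
V[F^1] = \sum_{i=1}^n \alpha_i {\sigma'}_i^2 = \int \frac{f^2(\x)}{\psi_{\alphab}(\x)}\, d\x - \sum_{i=1}^n \alpha_i ({\mu'_i})^2 .
\end{equation}
Since $\sum_i \alpha_i \mu'_i = \mu$ by \eqref{anewfamily1}, Jensen's inequality for $x\mapsto x^2$ gives $\sum_i \alpha_i ({\mu'_i})^2 \ge \mu^2$, so it suffices to bound $\int f^2/\psi_{\alphab}$ from above and then subtract $\mu^2$.

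For the main estimate, fix $t$ and put $c_k := v_k^{-t}$ (assuming, as is implicit in the statement, that all $v_k$ are finite and strictly positive). Cauchy--Schwarz applied pointwise with $a_k = \sqrt{\alpha_k\, p_k(\x)}$ and $b_k = \sqrt{\alpha_k}\, c_k/\sqrt{p_k(\x)}$ gives $\big(\sum_k \alpha_k c_k\big)^2 \le \psi_{\alphab}(\x)\,\sum_k \alpha_k c_k^2/p_k(\x)$, i.e.
\begin{equation}
\frac{1}{\psi_{\alphab}(\x)} \le \frac{\sum_{k} \alpha_k c_k^2/p_k(\x)}{\big(\sum_{k}\alpha_k c_k\big)^2}.
\end{equation}
Multiplying by $f^2(\x)$ and integrating term by term (valid, since all summands are nonnegative), and using the unbiasedness of each technique to write $\int f^2/p_k = v_k + \mu^2$, I get
\begin{equation}
\int \frac{f^2}{\psi_{\alphab}} \le \frac{\sum_k \alpha_k c_k^2 (v_k+\mu^2)}{\big(\sum_k \alpha_k c_k\big)^2} = \frac{\sum_k \alpha_k v_k^{1-2t}}{\big(\sum_k \alpha_k v_k^{-t}\big)^2} + \mu^2\,\frac{\sum_k \alpha_k v_k^{-2t}}{\big(\sum_k \alpha_k v_k^{-t}\big)^2}.
\end{equation}

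It then remains to recognise these ratios as weighted power means. With ${\cal{H}}(x_i;\alpha_i) = \big(\sum_i \alpha_i/x_i\big)^{-1}$, the first term above equals $\big({\cal{H}}(v_i^{t};\alpha_i)\big)^2 / {\cal{H}}(v_i^{2t-1};\alpha_i)$ and the coefficient of $\mu^2$ equals $\big({\cal{H}}(v_i^{t};\alpha_i)\big)^2 / {\cal{H}}(v_i^{2t};\alpha_i)$; subtracting the $\mu^2$ from the first step produces exactly \eqref{generalized_bound}. Specialising to $t=1$ and $t=\tfrac12$ recovers bounds \eqref{bound21} and \eqref{bound31} of Theorem~\ref{theorem_bound_1}, which also suggests that the proofs of those special cases in Appendix~C are this same argument. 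I do not anticipate a real obstacle; the points needing care are the bookkeeping identity $\int f^2/p_k = v_k + \mu^2$ --- this is exactly where ``all techniques unbiased'' is used, and is the natural place from which to peel off the biased variant --- and the degenerate cases where some $v_k = 0$ (a technique that is already a zero-variance estimator), where the harmonic means must be read with the usual limiting conventions.
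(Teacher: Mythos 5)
Your proof is correct, and the algebra in the final step matches the paper's exactly; but the route to the key inequality is more self-contained than the one in Appendix~C. The paper fixes the constant weights $w_i \propto \alpha_i/v_i^t$, computes the variance of the corresponding randomized linear-combination estimator ${\cal Z}$ from Eq.~\eqref{newbound1}, and then invokes two cited facts to conclude: Veach's Theorem 9.4 (balance-heuristic weights are optimal for the one-sample estimator, so $V[{\cal F}^1]\le V[{\cal Z}^1]$) and the multi-sample/one-sample ordering $V[F^1]\le V[{\cal F}^1]$ from Eq.~\eqref{differenceofZ5}. You instead prove both inequalities inline: your Jensen step $\sum_i\alpha_i(\mu'_i)^2\ge\mu^2$ is precisely the content of Eq.~\eqref{differenceofZ5} (cf.\ Appendix~A), and your pointwise Cauchy--Schwarz bound on $1/\psi_{\alphab}(\x)$ with weights $c_k=v_k^{-t}$ is exactly the pointwise optimality statement underlying Theorem 9.4, specialized to constant weights $w_k=\alpha_k c_k/\sum_j\alpha_j c_j$ (note $w_k^2/(\alpha_k p_k)=\alpha_k c_k^2/\big(p_k(\sum_j\alpha_j c_j)^2\big)$, so after integration you land on the same expression the paper computes for $V[{\cal Z}^1]$). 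What your version buys is a proof from first principles that does not pass through the randomized estimator at all; what the paper's version buys is that the bound is identified as the actual variance of a concrete competing estimator ${\cal Z}_{l}$, which is what makes the biased generalization in Theorem~\ref{theorem_generalized_bound_biased_1} immediate. Your closing remarks on where unbiasedness enters ($\int f^2/p_k=v_k+\mu^2$) and on degenerate $v_k$ are the right caveats and are also glossed over by the paper.
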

\begin{theorem}
\label{theorem_generalized_bound_biased_1}
For any $t$, the variance of $F^1$ is upper bounded as
\begin{equation}
\label{generalized_bound_biased}
 V[F^1] \leq \frac{\left({\cal{H}} ({v_i}^t;\alpha_i)\right)^2}{{\cal{H}} ({v_i}^{2t-1};\alpha_i)}  + 
\frac{\left({\cal{H}} ({v_i}^t;\alpha_i)\right)^2}{{\cal{H}}(\frac{v_i^{2t}}{\mu_i^2};\alpha_i)} -\mu^2.
\end{equation}
\end{theorem}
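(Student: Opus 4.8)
The plan is to reduce the claim to a single scalar inequality about the mixture density $\sum_k\alpha_k p_k$, and then to extract that inequality from a one--parameter Cauchy--Schwarz estimate whose exponent is precisely the free parameter $t$; the extra $\mu_i^2$--dependence will appear automatically through the decomposition of each single--proposal second moment as variance plus squared mean. For the reduction, combining $V[F^1]=\sum_i\alpha_i{\sigma'}_i^2$ from~\eqref{anewfamily9} with~\eqref{anewfamily5} and~\eqref{anewfamily2},
\[
V[F^1]=\sum_{i=1}^n\alpha_i\int\frac{f^2(\x)\,p_i(\x)}{\left(\sum_k\alpha_k p_k(\x)\right)^2}\,d\x-\sum_{i=1}^n\alpha_i(\mu'_i)^2 =\int\frac{f^2(\x)}{\sum_k\alpha_k p_k(\x)}\,d\x-\sum_{i=1}^n\alpha_i(\mu'_i)^2,
\]
because $\sum_i\alpha_i p_i$ in the numerator removes one power of the mixture. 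Since $\sum_i\alpha_i(\mu'_i)^2\ge\left(\sum_i\alpha_i\mu'_i\right)^2=\mu^2$ by Jensen's inequality for the probability vector $\{\alpha_i\}$ and~\eqref{differenceofZ3}, we obtain
\[
V[F^1]\ \le\ \int\frac{f^2(\x)}{\sum_k\alpha_k p_k(\x)}\,d\x-\mu^2
\]
(equivalently, this is the relation $V[F^1]\le V[{\cal F}^1]$ between the balance--heuristic and random--mixture estimators recalled in Section~\ref{preliminaries}), so it suffices to bound the mixture integral.

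Next, for any positive constants $c_1,\dots,c_n$, Cauchy--Schwarz gives, at each point $\x$,
\[
\left(\sum_k c_k\right)^2=\left(\sum_k\frac{c_k}{\sqrt{\alpha_k p_k(\x)}}\,\sqrt{\alpha_k p_k(\x)}\right)^2\le\left(\sum_k\frac{c_k^2}{\alpha_k p_k(\x)}\right)\left(\sum_k\alpha_k p_k(\x)\right),
\]
so $\left(\sum_k\alpha_k p_k(\x)\right)^{-1}\le\left(\sum_k c_k\right)^{-2}\sum_k c_k^2/\left(\alpha_k p_k(\x)\right)$. Multiplying by $f^2(\x)$, integrating, and abbreviating $s_i:=\int f^2(\x)/p_i(\x)\,d\x=v_i+\mu_i^2$ for the second moment of the single--proposal estimator from $p_i$,
\[
\int\frac{f^2(\x)}{\sum_k\alpha_k p_k(\x)}\,d\x\ \le\ \frac{\sum_i (c_i^2/\alpha_i)\,s_i}{\left(\sum_i c_i\right)^2}.
\]
The decisive step is to take $c_i=\alpha_i v_i^{-t}$ for the given real $t$ (in the non--degenerate case $v_i>0$): then $\sum_i c_i=\sum_i\alpha_i v_i^{-t}$ and $\sum_i(c_i^2/\alpha_i)s_i=\sum_i\alpha_i v_i^{-2t}s_i$, so the right--hand side becomes $\sum_i\alpha_i v_i^{-2t}s_i$ divided by $\left(\sum_i\alpha_i v_i^{-t}\right)^2$.

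It remains to recognize the weighted harmonic means. Since ${\cal{H}}(a_i;\alpha_i)^{-1}=\sum_i\alpha_i/a_i$, the denominator equals $\left({\cal{H}}(v_i^{t};\alpha_i)\right)^{-2}$, while $s_i=v_i+\mu_i^2$ splits the numerator as $\sum_i\alpha_i v_i^{-(2t-1)}+\sum_i\alpha_i\mu_i^2 v_i^{-2t}={\cal{H}}(v_i^{2t-1};\alpha_i)^{-1}+{\cal{H}}(v_i^{2t}/\mu_i^2;\alpha_i)^{-1}$; combined with the reduction above, this is exactly~\eqref{generalized_bound_biased}. I would close with the consistency checks: $t=1$ collapses the first summand to ${\cal{H}}(v_i;\alpha_i)$ and recovers~\eqref{bound22}; $t=\tfrac12$ uses ${\cal{H}}(1;\alpha_i)=1$ and recovers~\eqref{bound32}; and setting $\mu_i\equiv\mu$ turns ${\cal{H}}(v_i^{2t}/\mu_i^2;\alpha_i)$ into $\mu^{-2}{\cal{H}}(v_i^{2t};\alpha_i)$, reproducing Theorem~\ref{theorem_generalized_bound_1}.

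The step I expect to cost the most care is the Cauchy--Schwarz estimate in the genuinely biased case: at a point $\x$ with $f(\x)\neq0$ but $p_k(\x)=0$ for some $k$, the quantity $c_k^2/(\alpha_k p_k(\x))$ is not finite, so the pointwise inequality can only be invoked over the active index set $\{k:p_k(\x)>0\}$. One then has to verify that, after integrating each $k$--th contribution over $\mathrm{supp}(p_k)$ --- where it equals precisely $(c_k^2/\alpha_k)\,s_k$ with $s_k=v_k+\mu_k^2$ the truncated second moment --- the local estimates still reassemble into the bound claimed; this partial--coverage bookkeeping is exactly what the extra $\mu_i^2$--terms (absent from the unbiased Theorem~\ref{theorem_generalized_bound_1}) are there to absorb. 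In the fully covered regime the active set is always $\{1,\dots,n\}$ and the issue evaporates.
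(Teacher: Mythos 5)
Your argument is, in substance, the same proof the paper gives, just made self-contained: the paper's route is $V[F^1]\le V[{\cal F}^1]\le V[{\cal Z}_{l}^1]$, where ${\cal Z}_{l}$ is the randomized \emph{constant-weight} linear combination with $w_i={\cal H}(v_i^t;\alpha_i)\,\alpha_i/v_i^t$, the first inequality being Eq.~\eqref{differenceofZ5} and the second Veach's Theorem~9.4; the variance of ${\cal Z}_l$ is then computed exactly via $\int f^2/p_i=v_i+\mu_i^2$ and the harmonic means drop out. Your pointwise Cauchy--Schwarz with $c_i=\alpha_i v_i^{-t}$ is precisely the inequality that underlies Veach's Theorem~9.4 specialized to constant weights, and your reduction $V[F^1]=\int f^2/\sum_k\alpha_kp_k-\sum_i\alpha_i(\mu_i')^2\le\int f^2/\sum_k\alpha_kp_k-\mu^2$ is Eq.~\eqref{differenceofZ5} rederived. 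The algebra, the identification of the weighted harmonic means, and the consistency checks at $t=1,\tfrac12$ (and $\mu_i\equiv\mu$) are all correct and match Eqs.~\eqref{bound22}, \eqref{bound32} and Theorem~\ref{theorem_generalized_bound_1}. What your version buys is transparency: the bound is seen to follow from one elementary pointwise inequality, with no appeal to an auxiliary estimator or to the optimality theorem.

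The genuine gap is the one you flag yourself and then leave open, and it sits exactly where this theorem differs from the unbiased Theorem~\ref{theorem_generalized_bound_1}. When some $p_k$ vanishes on part of $\{f\neq0\}$ (which is what makes technique $k$ ``biased'' here), Cauchy--Schwarz over the active set $A(\x)=\{k:p_k(\x)>0\}$ gives
\begin{equation*}
\frac{1}{\sum_k\alpha_kp_k(\x)}\ \le\ \frac{\sum_{k\in A(\x)}c_k^2/(\alpha_kp_k(\x))}{\bigl(\sum_{k\in A(\x)}c_k\bigr)^2},
\end{equation*}
and since $\sum_{k\in A(\x)}c_k\le\sum_kc_k$ the local normalizer is \emph{larger} than the global one $(\sum_kc_k)^{-2}$ you need; the replacement goes the wrong way, so the ``reassembly'' into $\sum_i(c_i^2/\alpha_i)(v_i+\mu_i^2)/(\sum_ic_i)^2$ is not justified as written. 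This is not a bookkeeping detail that the $\mu_i^2$ terms automatically absorb --- it is the step that needs an actual argument (or an added hypothesis on the supports). To be fair, the paper's own proof buries the same difficulty: Theorem~9.4 is invoked for constant weights, which violate the requirement $w_i(\x)=0$ where $p_i(\x)=0$, and the computation of $V[{\cal Z}_{l}]$ in the biased case quietly assumes $\sum_iw_i\mu_i=\mu$ (a condition stated in Theorem~\ref{theorem_bound_biased_1} but absent from Theorem~\ref{theorem_generalized_bound_biased_1}). So your write-up is the more honest of the two, but as a proof of the biased statement it is incomplete at exactly the point you identified.
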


Observe that the three cases in Theorems~\ref{theorem_bound_1}, and \ref{theorem_bound_biased_1} correspond to $t=0, t=1$, and $t=1/2$, respectively.
\begin{nota}
Considering that the arithmetic mean is the inverse of harmonic mean of inverse values, and after changing $-t$ by $t$, the bound in Theorem~\ref{theorem_generalized_bound_1} can be written too as
\begin{equation}
\label{generalized_bound2}
{V[F^1] \leq}  \frac{{\cal{A}} ({v_i}^{2t+1};\alpha_i)}{\left({\cal{A}} ({v_i}^t;\alpha_i)\right)^2}  + 
\mu^2 \left( \frac{{\cal{A}}(v_i^{2t};\alpha_i) }{\left({\cal{A}} ({v_i}^t;\alpha_i)\right)^2} -1 \right).
\end{equation}
And for biased techniques
\begin{equation}
\label{generalized_bound_biased2}
{V[F^1] \leq}  \frac{{\cal{A}} ({v_i}^{2t+1};\alpha_i)}{\left({\cal{A}} ({v_i}^t;\alpha_i)\right)^2}  + 
\frac{{\cal{A}}(\mu_i^2 v_i^{2t};\alpha_i) }{\left({\cal{A}} ({v_i}^t;\alpha_i)\right)^2} -\mu^2.
\end{equation}
Observe that the three cases in Theorems~\ref{theorem_bound_1},\ref{theorem_bound_biased_1} correspond now to $t=0, t=-1$, and $t=-1/2$, respectively.
\end{nota}

 \subsection{Case 2: fixed $\{\alpha_i\}_{i=1}^n$}
Consider now that $\{\alpha_i\}_{i=1}^n$ are fixed, and hence also $\{ {\sigma'}_i^2 \}_{i=1}^n$, are fixed. From Cauchy-Schwartz inequality,
 \begin{eqnarray}
\label{optimalbeta0}
\left(\sum_{i=1}^n \alpha_i \sigma'_i\right)^2 \le \left(\sum_{i=1}^n \beta_i \right)
\left(\sum_{i=1}^n \frac {\alpha_i^2 {\sigma'}_i^2}{\beta_i}\right),
\end{eqnarray}
Equality can only happen when for all $i$, $\beta_i \propto \frac {\alpha_i ^2{\sigma'}_i^2}{\beta_i}$,
thus the optimal $\{\beta_i\}_{i=1}^n$ are given by 
 \begin{eqnarray}
\label{optimalbeta}
\beta_i^* \propto \alpha_i \sigma'_i,\quad i=1,...n,
\end{eqnarray}
and the optimal (minimum) variance is
 \begin{eqnarray}
\label{optimalbeta2}
V [G^{1*}] = \left(\sum_{i=1}^n \alpha_i \sigma'_i\right)^2.
\end{eqnarray}
\begin{theorem}
Given an estimator $F$ with $\{\alpha_i\}_{i=1}^n$ values, we can always find a better estimator $G$ by sampling as $\beta_i^* \propto \alpha_i \sigma'_i$, which is strictly better whenever not all ${\sigma'}_i^2$ are equal.
\end{theorem}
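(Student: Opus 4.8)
The statement is essentially a corollary of the Cauchy--Schwarz computation carried out just above, combined with the observation that $F$ itself belongs to the $G$-family. The plan is as follows. First I would recall that $F$ is recovered from $G$ by the choice $\beta_i=\alpha_i$ for all $i$, so that by Eq.~\eqref{anewfamily6bis} its normalized variance is $V[F^1]=\sum_{i=1}^n \alpha_i {\sigma'}_i^2$, exactly Eq.~\eqref{anewfamily9}. Second, for the \emph{same} fixed $\{\alpha_i\}_{i=1}^n$ (hence the same $\{{\sigma'}_i^2\}_{i=1}^n$), the sampling choice $\beta_i^*\propto \alpha_i\sigma'_i$ is admissible: since all $\alpha_i>0$ and $\sigma'_i\ge 0$, normalizing by $\sum_k \alpha_k\sigma'_k$ yields a valid vector with $\beta_i^*>0$ (assuming the $\sigma'_i$ are not all zero, in which case the statement is vacuous) and $\sum_i\beta_i^*=1$, and by Eq.~\eqref{optimalbeta2} the resulting estimator $G^{1*}$ has variance $V[G^{1*}]=\left(\sum_{i=1}^n \alpha_i\sigma'_i\right)^2$.

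It then remains to compare the two quantities. Using $\sum_{i=1}^n\alpha_i=1$ and writing $\alpha_i\sigma'_i=\sqrt{\alpha_i}\cdot\sqrt{\alpha_i}\,\sigma'_i$, the Cauchy--Schwarz inequality (equivalently, Jensen applied to $x\mapsto x^2$ with the probability weights $\alpha_i$) gives
\begin{equation}
\label{csproof}
V[G^{1*}]=\left(\sum_{i=1}^n \alpha_i\sigma'_i\right)^2 \le \left(\sum_{i=1}^n\alpha_i\right)\left(\sum_{i=1}^n\alpha_i{\sigma'}_i^2\right)=\sum_{i=1}^n\alpha_i{\sigma'}_i^2=V[F^1],
\end{equation}
which is the claimed improvement. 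Finally, the equality case of Cauchy--Schwarz (here all $\alpha_i>0$) holds if and only if $\sigma'_i$ is the same for every $i$; hence whenever the ${\sigma'}_i^2$ are not all equal the inequality in Eq.~\eqref{csproof} is strict, and the optimal $G$ is strictly better than $F$. Since $V[G^{1*}]$ is the minimal variance over all admissible $\{\beta_i\}$, the estimator $G$ built with $\beta_i^*$ is in particular no worse than $F$ (which corresponds to $\beta_i=\alpha_i$), with strictness under the stated non-degeneracy condition.

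I do not anticipate a genuine obstacle here: the only points requiring a word of care are (i) checking that $\beta_i^*$ is a legitimate probability vector, which needs $\sum_k\alpha_k\sigma'_k>0$ (the degenerate all-zero case being trivial), and (ii) stating the equality condition of Cauchy--Schwarz precisely, namely proportionality of $(\sqrt{\alpha_i})_i$ and $(\sqrt{\alpha_i}\,\sigma'_i)_i$, which since $\alpha_i>0$ reduces to constancy of the $\sigma'_i$. Everything else is a direct substitution into Eqs.~\eqref{anewfamily9} and~\eqref{optimalbeta2}.
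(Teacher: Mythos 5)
Your proposal is correct and follows essentially the same route as the paper: identify $V[F^1]=\sum_i\alpha_i{\sigma'}_i^2$ as the $\beta_i=\alpha_i$ case, obtain $V[G^{1*}]=\left(\sum_i\alpha_i\sigma'_i\right)^2$ for $\beta_i^*\propto\alpha_i\sigma'_i$, and compare the two via the Cauchy--Schwarz inequality $\left(\sum_i\alpha_i\sigma'_i\right)^2\le\left(\sum_i\alpha_i\right)\left(\sum_i\alpha_i{\sigma'}_i^2\right)$ with the same equality characterization. Your added remarks on the admissibility of $\beta_i^*$ as a probability vector are a small but harmless refinement beyond what the paper states.
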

\begin{proof}
Observe that, by Cauchy-Schwartz inequality,
 \begin{eqnarray}
\label{optimalbeta3}
\left(\sum_{i=1}^n \alpha_i \sigma'_i \right)^2 \le \left(\sum_{i=1}^n \alpha_i \right)
\left(\sum_{i=1}^n \alpha_i {\sigma'}_i^2 \right),
\end{eqnarray}
and hence, for the optimal values $\{\beta_i^*\}_{i=1}^n$ as in Eq. \eqref{optimalbeta}, the estimator $G^*$ always outperforms the estimator $F$ (in Eq. \eqref{optimalbeta3}, the left hand side is $V [G^1] $ while the right hand side is $V [F^1] $, since $\sum_{i=1}^n \alpha_i = 1$. Equality in Eq.~\eqref{optimalbeta3} only happens when for all $i$, $\alpha_i \propto \alpha_i {\sigma'}_i^2$, i.e., when all ${\sigma'}_i^2$ are equal.
\end{proof}
\begin{nota}
Note that, observing the two members {on the right-hand side} of Eq.~\eqref{optimalbeta3}, the maximum possible acceleration by using the optimal $\beta_i^*$ values when for all $i$, $\alpha_i = 1/n$ is equal to $n$ (\cite{SbertHavranSzirmay2016}).
\end{nota}

Let us now take into account the cost of each sampling technique is different, as it is usually considered in the literature (\cite{rubinstein2008simulation}). Let us denote the cost of sampling technique $i$ as  $c_i$. The inverse of efficiency for the estimator $G$ is given by 
\begin{eqnarray}
\label{optimalbeta4}
E_G^{-1} = \left(\sum_{i=1}^n \beta_i c_i \right) \left(\sum_{i=1}^n \frac {\alpha_i^2 {\sigma'}_i^2}{\beta_i} \right). 
\end{eqnarray}
Note that this quantity represents the total cost multiplied by the variance of the estimator. Using Cauchy-Schwartz inequality,
\begin{eqnarray}
\label{optimalbeta5}
\left(\sum_{i=1}^n \alpha_i \sigma'_i \sqrt{c_i} \right)^2 \le \left(\sum_{i=1}^n \beta_i c_i \right)
\left(\sum_{i=1}^n \frac {\alpha_i^2 {\sigma'}_i^2}{\beta_i} \right).
\end{eqnarray}
The optimal sampling rates (for maximizing the efficiency) are those that yield Eq. \eqref{optimalbeta5} as an equality, which happens when $\beta_i^* \propto \frac{\alpha_i {\sigma'}_i }{\sqrt{c_i}}$. Observe that, using again the Cauchy-Schwartz theorem, 
\begin{eqnarray}
\label{optimalbeta6}
\left(\sum_{i=1}^n \alpha_i \sigma'_i \sqrt{c_i} \right)^2 \le \left(\sum_{i=1}^n \alpha_i c_i \right) \left(\sum_{i=1}^n \alpha_i {\sigma'}_i^2 \right),
\end{eqnarray}
where the left hand side is $E_G^{-1}$ with the optimal sampling rates, and the right hand side is $E_F^{-1}$. Note that equality only happens when for all $i$, $c_i \propto {\sigma'}_i^2$. This is summarized in the following theorem.
\begin{theorem}
Given an estimator $F$ with $\{\alpha_i\}_{i=1}^n$ values, and sampling costs $\{c_i\}_{i=1}^n$, we can always find a more efficient estimator $G^*$ when for all $i$,  $\beta_i^* \propto \alpha_i \frac{\sigma'_i}{\sqrt{c_i}}$, which is strictly more efficient whenever not all $c_i \propto {\sigma'}_i^2$.
\end{theorem}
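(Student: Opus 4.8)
The plan is to argue exactly along the lines of the efficiency computation that immediately precedes the statement, combining two applications of the Cauchy--Schwarz inequality. First I would recall that the inverse efficiency of $G$ is $E_G^{-1} = \left(\sum_{i=1}^n \beta_i c_i\right)\left(\sum_{i=1}^n \frac{\alpha_i^2 {\sigma'}_i^2}{\beta_i}\right)$ as in Eq.~\eqref{optimalbeta4}, i.e. the total sampling cost times $V[G^1]$ from Eq.~\eqref{anewfamily6bis}. Applying Cauchy--Schwarz to the vectors with entries $\sqrt{\beta_i c_i}$ and $\frac{\alpha_i \sigma'_i}{\sqrt{\beta_i}}$ gives the lower bound of Eq.~\eqref{optimalbeta5}, with equality precisely when $\beta_i c_i \propto \frac{\alpha_i^2 {\sigma'}_i^2}{\beta_i}$, that is, $\beta_i^* \propto \frac{\alpha_i \sigma'_i}{\sqrt{c_i}}$. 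Substituting this allocation makes Eq.~\eqref{optimalbeta5} an equality, so $E_{G^*}^{-1} = \left(\sum_{i=1}^n \alpha_i \sigma'_i \sqrt{c_i}\right)^2$; note that this value is unchanged under rescaling of $\beta^*$, so the normalization $\sum_i \beta_i^* = 1$ causes no issue.

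Next I would compare this optimal value with $F$. For the balance heuristic estimator $\beta_i = \alpha_i$ for all $i$, hence $E_F^{-1} = \left(\sum_{i=1}^n \alpha_i c_i\right)\left(\sum_{i=1}^n \alpha_i {\sigma'}_i^2\right)$, using $\sum_i \alpha_i = 1$. A second application of Cauchy--Schwarz, now to the vectors $\sqrt{\alpha_i c_i}$ and $\sqrt{\alpha_i}\,\sigma'_i$, yields Eq.~\eqref{optimalbeta6}, i.e. $\left(\sum_{i=1}^n \alpha_i \sigma'_i \sqrt{c_i}\right)^2 \le \left(\sum_{i=1}^n \alpha_i c_i\right)\left(\sum_{i=1}^n \alpha_i {\sigma'}_i^2\right) = E_F^{-1}$. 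Chaining the two displays gives $E_{G^*}^{-1} \le E_F^{-1}$, so $G^*$ is at least as efficient as $F$.

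For the strictness claim I would inspect the equality case of the second Cauchy--Schwarz step: equality holds iff $\sqrt{\alpha_i c_i}$ and $\sqrt{\alpha_i}\,\sigma'_i$ are proportional, i.e. $\sqrt{c_i} \propto \sigma'_i$, equivalently $c_i \propto {\sigma'}_i^2$ for all $i$ (valid since $\alpha_i > 0$). Whenever this proportionality fails, the inequality is strict and $G^*$ is strictly more efficient than $F$. There is essentially no hard step here: the only thing to be careful about is stating the two equality conditions precisely and confirming that plugging $\beta_i = \alpha_i$ gives the correct expression for $E_F^{-1}$ — the rest is the routine bookkeeping already present in the surrounding discussion.
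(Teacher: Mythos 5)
Your proof is correct and follows exactly the paper's own argument: the same two applications of Cauchy--Schwarz (Eqs.~\eqref{optimalbeta5} and \eqref{optimalbeta6}), the same identification of the optimal $\beta_i^*\propto \alpha_i\sigma'_i/\sqrt{c_i}$ from the first equality condition, and the same equality condition $c_i\propto{\sigma'}_i^2$ for strictness. Your added care about the equality cases and the normalization of $\beta^*$ is a modest improvement in precision but not a different route.
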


A particular case is when $\alpha_i = \frac{1}{n}$, $\forall i$, then
the variance becomes
  \begin{eqnarray}
\label{anewfamily6bis2}
V [G^1] = \frac{1}{n^2 }  \sum_{i=1}^n \frac{{\sigma'}_i^2 }{\beta_i}.
 \end{eqnarray}
This case was introduced in \cite[Section 4]{SbertHavranSzirmay2016}. It was shown that this estimator is provably better than $F$ with $\alpha_i = 1/n, \forall i$ when
 \begin{eqnarray}
\label{anewfamily6bis3}
\beta_i^* \propto \sigma'_i,\quad i=1,...,n,
\end{eqnarray}
which is the optimal case of Eq.~\eqref{anewfamily6bis2}. Examples showing the improvement obtained were also given in~\cite{SbertHavranSzirmay2016}.

 \subsection{Case 3: fixed $\{\beta_i\}_{i=1}^n$}
 \begin{theorem}
  Consider now a fixed set $\{\beta_i\}_{i=1}^n$. The optimal set $\{\alpha_i^*\}_{i=1}^n$ can be found using Lagrange multipliers with target function $$\Lambda (\{\alpha_i\}_{i=1}^n, \lambda)= \sum_{i=1}^n \frac{\alpha_i^2 {\sigma'}_i^2}{\beta_i} + \lambda \left(\sum_{i=1}^n \alpha_i - 1 \right).$$
 Observe that the ${\sigma'}_i^2$ values depend on the $\{\alpha_i\}_{i=1}^n$ values. The optimal values are those that obey, for all $j$, the following expression
 \begin{align}
\label{betafixed1}
& \frac{\alpha_j^* {\sigma'}_j^2}{\beta_j}  =  \sum_{i=1}^n  \frac{\alpha_i^{*2} }{\beta_i}  \times \nonumber \\ 
& \left( \int \frac{f^2(\x) p_i(\x) p_j(\x)}{(\sum_{k=1}^n \alpha_k^* p_k(\x))^3} d\x
 - {\mu'_i} \int \frac{f(\x) p_i(\x) p_j(\x)}{(\sum_{k=1}^n \alpha_k^* p_k(\x))^2}d\x \right).
\end{align}
\end{theorem}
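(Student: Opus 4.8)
The plan is to minimize the normalized variance $V[G^1]=\sum_{i=1}^n \alpha_i^2{\sigma'}_i^2/\beta_i$ from Eq.~\eqref{anewfamily6bis} over the simplex $\sum_{i=1}^n\alpha_i=1$, with $\{\beta_i\}_{i=1}^n$ held fixed, using exactly the Lagrangian $\Lambda$ written in the statement (the $\beta_i$-constraint plays no role here, so only the single multiplier $\lambda$ appears). The one genuinely non-trivial point is that both ${\sigma'}_i^2$ and $\mu'_i$ are themselves functions of $\{\alpha_k\}_{k=1}^n$, entering only through the mixture denominator $\psi(\x):=\sum_{k=1}^n\alpha_k p_k(\x)$; hence $\partial\Lambda/\partial\alpha_j$ picks up, besides the ``explicit'' term coming from the prefactor $\alpha_i^2$, an ``implicit'' term obtained by differentiating the integrals under the integral sign. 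So first I would record $\partial\psi/\partial\alpha_j=p_j(\x)$ and, from Eqs.~\eqref{anewfamily2} and \eqref{anewfamily5}, the two elementary identities $\partial\mu'_i/\partial\alpha_j=-\int f(\x)p_i(\x)p_j(\x)/\psi(\x)^2\,d\x$ and $\partial\big(\int f^2(\x) p_i(\x)/\psi(\x)^2\,d\x\big)/\partial\alpha_j=-2\int f^2(\x)p_i(\x)p_j(\x)/\psi(\x)^3\,d\x$.

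Next I would assemble $\partial V[G^1]/\partial\alpha_j$ by the product rule. Writing $V[G^1]=\sum_i\frac{\alpha_i^2}{\beta_i}\big(\int\frac{f^2 p_i}{\psi^2}d\x-{\mu'_i}^2\big)$, differentiating the prefactor $\alpha_i^2/\beta_i$ contributes only at $i=j$ and yields $\frac{2\alpha_j}{\beta_j}{\sigma'}_j^2$ (using $\int f^2 p_j/\psi^2\,d\x-{\mu'_j}^2={\sigma'}_j^2$), while differentiating the bracket with the identities above yields $-2\sum_i\frac{\alpha_i^2}{\beta_i}\big(\int\frac{f^2 p_i p_j}{\psi^3}d\x-\mu'_i\int\frac{f p_i p_j}{\psi^2}d\x\big)$; here the cross term $-2\mu'_i\,\partial\mu'_i/\partial\alpha_j$ is precisely what produces the $+\mu'_i\int f p_i p_j/\psi^2$ piece with the correct sign. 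Setting $\partial\Lambda/\partial\alpha_j=0$ then gives, for every $j$, $\frac{2\alpha_j^*{\sigma'}_j^2}{\beta_j}-2\sum_i\frac{\alpha_i^{*2}}{\beta_i}\big(\int\frac{f^2 p_i p_j}{\psi^3}d\x-\mu'_i\int\frac{f p_i p_j}{\psi^2}d\x\big)+\lambda=0$.

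Finally I would show that the multiplier vanishes, $\lambda=0$, which turns the display above (after dividing by $2$) into Eq.~\eqref{betafixed1}. The cleanest argument is that $V[G^1]$ is homogeneous of degree $0$ in $(\alpha_1,\dots,\alpha_n)$: scaling $\alpha\mapsto t\alpha$ scales $\psi\mapsto t\psi$, hence ${\sigma'}_i^2\mapsto t^{-2}{\sigma'}_i^2$, which exactly cancels $\alpha_i^2\mapsto t^2\alpha_i^2$. By Euler's identity $\sum_j\alpha_j\,\partial V[G^1]/\partial\alpha_j=0$, so multiplying the stationarity equations by $\alpha_j^*$, summing over $j$, and using $\sum_j\alpha_j^*=1$ forces $\lambda=0$. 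Equivalently, and without invoking Euler, one checks directly that $\sum_j\alpha_j^*$ applied to the first two terms cancels: $\sum_j\alpha_j^*\frac{2\alpha_j^*{\sigma'}_j^2}{\beta_j}=2V[G^1]$, while collapsing $\sum_j\alpha_j^*p_j=\psi$ inside the two integrals reduces the second group to $2\sum_i\frac{\alpha_i^{*2}}{\beta_i}\big(\int\frac{f^2 p_i}{\psi^2}d\x-{\mu'_i}^2\big)=2V[G^1]$ as well. The main obstacle is purely the differentiation-under-the-integral bookkeeping (tracking signs, the exponents $2$ and $3$ on $\psi$, and separating explicit from implicit $\alpha$-dependence); the $\lambda=0$ step is short once the degree-zero homogeneity is noticed.
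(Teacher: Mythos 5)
Your proposal is correct and follows essentially the same route as the paper's Appendix D: the same Lagrangian, the same differentiation under the integral sign giving $\partial\mu'_i/\partial\alpha_j=-\int f p_i p_j/\psi^2\,d\x$ and $\partial{\sigma'}_i^2/\partial\alpha_j=-2\int f^2p_ip_j/\psi^3\,d\x+2\mu'_i\int fp_ip_j/\psi^2\,d\x$, the same split into the explicit $2\alpha_j{\sigma'}_j^2/\beta_j$ term and the implicit sum, and the same verification that $\lambda=0$ by multiplying by $\alpha_j^*$, summing, and collapsing $\sum_j\alpha_j^*p_j=\psi$. Your observation that $\lambda=0$ is forced by the degree-zero homogeneity of $V[G^1]$ in $\bm{\alpha}$ together with Euler's identity is a nice conceptual gloss the paper does not state, but it reproduces exactly the cancellation the paper carries out by hand.
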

\begin{proof}
The derivation can be found in the Appendix D.
\end{proof}
\begin{nota}
Note that in the general case, the optimal $\alpha_i^* \ne \beta_i$. However, a particular case when $\alpha_i^* = \beta_i$, is when all values $\mu'_i$ happen to be equal for these $\alpha_i^*$ values, and thus Eq.~\eqref{betafixed1} is satisfied for all $j$. See Appendix D for a further explanation. This is in concordance with Theorems 2 and 4 in \cite{Veach:PhD}. 
\end{nota}

\subsection{Case 4: $\beta_i = 1/n, \forall i$}
\label{allbetaequal}
In the case when for all $i$, $\beta_i = 1/n$, the variance becomes
 \begin{eqnarray}
\label{anewfamily6bis4}
V [G^1] = \sum_{i=1}^n \frac{ \alpha_i^2 {\sigma'}_i^2 }{1/n} = n \sum_{i=1}^n \alpha_i^2 {\sigma'}_i^2.
 \end{eqnarray}
Note that this is a usual case in the MIS literature strategies (\cite{ElviraMLB15,elvira2019generalized,elvira2016multiple,elvira2016heretical}) and the adaptive IS (AIS) literature (\cite{Cappe04,CORNUET12,martino2017layered,elvira2017improving,bugallo2017adaptive}), since all the techniques have the same number of counts.
By setting in Eq. \eqref{betafixed1} for all $i$, $\beta_i =1/n$, and if we can optimize $\{ \alpha_j\}_{j=1}^n$, we can find the minimum variance values $\{ \alpha_j^*\}_{j=1}^n$. 
% By setting in Eq. \eqref{betafixed1} for all $i$, $\beta_i =1/n$, and if we can optimize $\{ \alpha_j\}_{j=1}^n$, 
Thus the minimum variance $V[G^*]$ corresponds to the values $\{ \alpha_j^*\}_{j=1}^n$ that satisfy
 \begin{align}
\label{optimallagrange8} 
& \alpha_j^* {\sigma'}_j^2  =  \sum_{i=1}^n  \alpha_i^{*2} \\ \nonumber
& \times  \left( \int \frac{f^2(\x) p_i(\x) p_j(\x)}{(\sum_{k=1}^n \alpha_k^* p_k(\x))^3} d\x
 - {\mu'_i} \int \frac{f(\x) p_i(\x) p_j(\x)}{(\sum_{k=1}^n\alpha_k^* p_k(\x))^2}d\x \right).
\end{align}
The corresponding variance $V[G^*]$ will be less or equal than the variance of $V[G]$ for all $\{ \alpha_j\}_{j=1}^n$, and in particular for $\alpha_i =1/n$, where $G$ converts into $F$ as $\beta_i = \alpha_i =1/n$, the classic balance heuristic estimator, thus $V[G^*] \le V[F]$.

Apart from the optimal value $\{ \alpha_j^*\}_{j=1}^n$, we can find cases where $V[G^1] \le V[F^1]$ for $\beta_i =1/n$, 
 \begin{theorem}
 \label{theorem_LD1}
 If for $i < j$, $\alpha_i {\sigma'}_i^2 \le \alpha_j {\sigma'}_j^2 \Rightarrow \alpha_i \ge \alpha_j $, then
    \begin{eqnarray}
\label{anewfamily12} 
 V[F^1] \ge V[G^1]
  \end{eqnarray}
 \end{theorem}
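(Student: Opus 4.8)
The plan is to recognize the claim as an instance of Chebyshev's sum inequality. Recall from Eq.~\eqref{anewfamily9} that $V[F^1] = \sum_{i=1}^n \alpha_i {\sigma'}_i^2$, and from Eq.~\eqref{anewfamily6bis4} that, for $\beta_i = 1/n$, $V[G^1] = n\sum_{i=1}^n \alpha_i^2 {\sigma'}_i^2$. A key (but easy) observation is that here $F$ and $G$ share the same set $\{\alpha_i\}_{i=1}^n$ and differ only in the sampling counts ($\beta_i=\alpha_i$ versus $\beta_i=1/n$), so the quantities ${\sigma'}_i^2$ of Eq.~\eqref{anewfamily5} are the same in both estimators and no hidden dependence on $\alpha$ needs to be tracked. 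Writing $A_i := \alpha_i {\sigma'}_i^2$, we get $V[F^1] = \sum_i A_i$ and $V[G^1] = n\sum_i \alpha_i A_i$, so that (using $\sum_i\alpha_i=1$ to supply the first factor) the desired inequality $V[F^1]\ge V[G^1]$ is exactly
\begin{equation*}
\Big(\sum_{i=1}^n \alpha_i\Big)\Big(\sum_{i=1}^n A_i\Big) \;\ge\; n\sum_{i=1}^n \alpha_i A_i .
\end{equation*}

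Next I would unpack the hypothesis. It states that for index pairs $i<j$, $A_i\le A_j$ forces $\alpha_i\ge\alpha_j$ (and, read symmetrically, $A_i\ge A_j$ forces $\alpha_i\le\alpha_j$); equivalently, after relabelling the techniques so that $A_1\le A_2\le\dots\le A_n$, the coefficients obey $\alpha_1\ge\alpha_2\ge\dots\ge\alpha_n$. In other words, the two finite sequences $(\alpha_i)_{i=1}^n$ and $(A_i)_{i=1}^n$ are \emph{oppositely} ordered, i.e. $(\alpha_i-\alpha_j)(A_i-A_j)\le 0$ for all $i,j$.

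The remaining step is Chebyshev's sum inequality for oppositely ordered sequences, which I would prove in one line by summing the nonpositive terms $\sum_{i,j}(\alpha_i-\alpha_j)(A_i-A_j)\le 0$ and expanding, which yields $2n\sum_i \alpha_i A_i - 2\big(\sum_i\alpha_i\big)\big(\sum_i A_i\big)\le 0$. Rearranging and using $\sum_i\alpha_i=1$ gives precisely $n\sum_i \alpha_i A_i \le \sum_i A_i$, i.e. $V[G^1]\le V[F^1]$. One can also note that equality forces every cross term to vanish, hence all $A_i$ equal or all $\alpha_i$ equal, which matches the earlier Remark that $F$ and $G$ coincide when $\alpha_i=1/n$ for all $i$.

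The only genuinely delicate point is translating the stated monotonicity condition into the ``oppositely ordered'' form $(\alpha_i-\alpha_j)(A_i-A_j)\le 0$ demanded by Chebyshev; everything else is bookkeeping. In particular one must read the hypothesis as a constraint that genuinely binds across all index pairs (which the relabelling formulation above makes precise), rather than as an implication that could be satisfied vacuously for some unfortunate labelling.
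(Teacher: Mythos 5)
Your proof is correct, and it reduces the claim to exactly the same inequality as the paper does, namely $\sum_i \tfrac{1}{n}(\alpha_i{\sigma'}_i^2)\ge\sum_i\alpha_i(\alpha_i{\sigma'}_i^2)$, i.e.\ a comparison of two weighted averages of the sequence $A_i=\alpha_i{\sigma'}_i^2$. Where you differ is in how that inequality is established: the paper invokes likelihood-ratio dominance of the weight sequence $1/n$ over $\{\alpha_i\}$ and cites the stochastic-orders literature (Belzunce et al.; Sbert and Poch) as a black box, whereas you prove it from scratch via Chebyshev's sum inequality, summing the nonpositive products $(\alpha_i-\alpha_j)(A_i-A_j)$ over all pairs and expanding. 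Your route buys a short, self-contained, elementary argument plus an explicit equality analysis (all $A_i$ equal or all $\alpha_i$ equal, consistent with the paper's remark that equality occurs when $\alpha_i\propto 1/{\sigma'}_i^2$); the paper's route buys almost no work on the page and situates the result inside a general dominance framework that it reuses for Theorem~\ref{theorem_LD2}. You were also right to flag the one delicate point, which affects both proofs equally: the hypothesis must be read as an ordering constraint binding on all index pairs (equivalently, $(\alpha_i-\alpha_j)(\alpha_i{\sigma'}_i^2-\alpha_j{\sigma'}_j^2)\le 0$ for all $i,j$), since a literal reading of the implication only for $i<j$ could be vacuously satisfied by a labelling for which the conclusion fails; with that reading your argument is complete.
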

 \begin{proof}
 We can write the inequality  \eqref{anewfamily12} as
     \begin{eqnarray}
\label{anewfamily13} 
 \sum_{i=1}^n \alpha_i {\sigma'}_i^2  &\ge& n \sum_{i=1}^n \alpha_i^2 {\sigma'}_i^2 \\
 \sum_{i=1}^n \frac{1}{n} (\alpha_i {\sigma'}_i^2) &\ge&  \sum_{i=1}^n \alpha_i (\alpha_i {\sigma'}_i^2),
  \end{eqnarray}
 and, as in the case of the hypothesis of the theorem, there is likelihood-dominance (\cite{Belzunce}) of sequence $1/n$ over sequence $\alpha_i$, then  Eq. \eqref{anewfamily13} holds. See also \cite{SbertPoch2016}, \cite{SbertHavranSzirmayElvira2018}.
 \end{proof}
 Observe that equality in Eq.~\eqref{anewfamily12} happens when $\alpha_i \propto 1/ {\sigma'}_i^2$.\\
  It can be equally proved the reverse case of Theorem~\ref{theorem_LD1} in the following theorem.
  \begin{theorem}
  \label{theorem_LD2}
  If for $i < j$, $\alpha_i {\sigma'}_i^2 \le \alpha_j {\sigma'}_j^2 \Rightarrow \alpha_i \le \alpha_j $, then
   \begin{eqnarray}
 \label{anewfamily12bis} 
  V[F^1] \le V[G^1].
   \end{eqnarray}
%   \victor{Es esto correcto?}
  \end{theorem}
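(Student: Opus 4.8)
The plan is to run the mirror image of the proof of Theorem~\ref{theorem_LD1}, reversing every inequality. First I would substitute the closed forms already available: from Eq.~\eqref{anewfamily9}, $V[F^1]=\sum_{i=1}^n \alpha_i {\sigma'}_i^2$, and from Eq.~\eqref{anewfamily6bis4}, $V[G^1]=n\sum_{i=1}^n \alpha_i^2 {\sigma'}_i^2$ (we are in Case~4, $\beta_i=1/n$). Writing $a_i:=\alpha_i {\sigma'}_i^2$ and dividing by $n$, the claim \eqref{anewfamily12bis} becomes
\[
\sum_{i=1}^n \tfrac{1}{n}\, a_i \;\le\; \sum_{i=1}^n \alpha_i\, a_i ,
\]
i.e.\ the weighted average of the numbers $a_i=\alpha_i{\sigma'}_i^2$ under the uniform weights $1/n$ does not exceed their weighted average under the weights $\alpha_i$; both weight vectors are probability vectors since $\sum_{i=1}^n\alpha_i=1$.

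Next I would read the hypothesis ``for $i<j$, $\alpha_i{\sigma'}_i^2\le\alpha_j{\sigma'}_j^2 \Rightarrow \alpha_i\le\alpha_j$'' as the statement that the weight sequence $(\alpha_i)$ and the value sequence $(a_i)$ are similarly ordered --- equivalently, that $(\alpha_i)$ likelihood-dominates the uniform sequence $(1/n)$ along the ordering induced by $(a_i)$. This is exactly the symmetric counterpart of the situation in Theorem~\ref{theorem_LD1}, where the two sequences were oppositely ordered and $(1/n)$ dominated $(\alpha_i)$. Invoking the same likelihood-dominance result of \cite{Belzunce} used there --- equivalently, Chebyshev's sum inequality $n\sum_i\alpha_i a_i\ge\big(\sum_i\alpha_i\big)\big(\sum_i a_i\big)$ for similarly ordered sequences --- together with $\sum_i\alpha_i=1$, yields $\sum_i\alpha_i a_i\ge\tfrac{1}{n}\sum_i a_i$, which is the displayed inequality; hence $V[F^1]\le V[G^1]$. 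As in the remark after Theorem~\ref{theorem_LD1}, I would also record that equality holds if and only if all $a_i=\alpha_i{\sigma'}_i^2$ coincide, i.e.\ $\alpha_i\propto 1/{\sigma'}_i^2$, consistently with the equality case there.

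The only genuinely delicate point is making the ``similarly ordered / likelihood-dominance'' step airtight when there are ties among the $a_i$ or pairs $i<j$ for which the hypothesis is vacuously satisfied. I would sidestep any global sorting and reduce instead to the pairwise form $\sum_{i<j}(a_i-a_j)(\alpha_i-\alpha_j)\ge 0$, which is algebraically equivalent to the Chebyshev inequality above, and then check the sign of each summand directly from the hypothesis and its contrapositive. Everything else is routine substitution, so I expect the final write-up to be essentially the exact dual of the proof of Theorem~\ref{theorem_LD1}, which is why the paper can assert that the reverse case ``can be equally proved.''
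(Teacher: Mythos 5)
Your proposal matches the paper's own (one-line) argument: the paper likewise observes that under the stated hypothesis the sequence $(\alpha_i)$ likelihood-dominates $(1/n)$ with respect to the values $\alpha_i{\sigma'}_i^2$, cites \cite{Belzunce}, and concludes $\sum_i \tfrac{1}{n}\,\alpha_i{\sigma'}_i^2 \le \sum_i \alpha_i\,(\alpha_i{\sigma'}_i^2)$, i.e.\ $V[F^1]\le V[G^1]$. Your explicit reduction to the Chebyshev sum inequality via $\sum_{i<j}(a_i-a_j)(\alpha_i-\alpha_j)\ge 0$ is simply an elementary unpacking of that cited dominance result, so the route is essentially identical (and your attention to ties and vacuously satisfied pairs only makes the step more careful than the paper's).
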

  \begin{proof}
  {In that case where there is likelihood-dominance \citep{Belzunce} of sequence $\alpha_i$ over sequence $1/n$, then} %Eq.~\eqref{anewfamily14} holds,
 \begin{eqnarray}
 \label{anewfamily14}
  \big( \sum_{i=1}^n \alpha_i {\sigma'}_i^2 \big) \le n \big( \sum_{i=1}^n \alpha_i^2 {\sigma'}_i^2 \big).
  \end{eqnarray}
  \end{proof}

%%%%%%%%%%%%%%%%%%%%%%%%%

\section{Numerical examples}
\label{1D examples}
\subsection{Efficiency comparison between $F$ and $G$ estimators}
\label{comparison}
We compare the efficiencies for the $F$ estimator and the optimal $G$ estimator in 5 different examples. Table \ref{table:functions1} shows the inverse of the efficiencies, $E_F^{-1}=V[F]\cdot Cost[F]$ and $E_G^{-1}=V[G]\cdot Cost[G]$, i.e., the product of variance and cost for the $F$ estimator and for the optimal $G$ estimator, for these possible sets of $\{ \alpha_k \}_{k=1}^n$: (i) equal count of samples, (ii) count inversely proportional to variances of independent techniques (\cite{vrcaiHavranSbert14}), (iii) the new heuristic defined in
Section 6 of \cite{Sbert2018}, (iv) optimal count in \cite{SbertHavranSzirmay2016}, (v) and the two balance heuristic provably better estimators defined in \cite[Sections 4 and 5]{Sbert2018}. In the following, we describe the 5 examples.

%--------------
\subsubsection*{Example 1}
Suppose we want to solve the integral
\begin{equation}
\mu = \int_{\frac{3}{2\pi}}^{\pi} \x \left(\x^2-\frac{\x}{\pi }\right) \sin(\x) {\mathrm d}\x \approx 10.29
\end{equation}
 by MIS sampling on functions $\x$, $(\x^2-\frac{\x}{\pi })$, and
 $\sin(\x)$, respectively.  We first find the normalization constants:
 $\int_{\frac{3}{2\pi}}^{\pi} \x {\mathrm d}\x = 4.82$, $\int_{\frac{3}{2\pi}}^{\pi} (\x^2-\frac{\x}{\pi })
 {\mathrm d}\x = 8.76 $, $\int_{\frac{3}{2\pi}}^{\pi} \sin(\x) {\mathrm d}\x = 1.89$. The costs for sampling the techniques are
(1; 6.24; 3.28).
\vspace*{2mm}

%--------------
\subsubsection*{Example 2}
Let us solve the integral
\begin{equation}
\mu = \int_{\frac{3}{2\pi}}^{\pi}  \left(\x^2-\frac{\x}{\pi }\right) \sin^2 (\x) {\mathrm d}\x  \approx 3.60
\end{equation}
using the same functions $\x$, $(\x^2-\frac{\x}{\pi })$, and $\sin(\x)$ as
before.
%

%--------------
\subsubsection*{Example 3}
As the third example, let us solve the integral
\begin{equation}
\mu = \int_{\frac{3}{2\pi}}^{\pi}  \x+\left(\x^2-\frac{\x}{\pi }\right) + \sin (\x) {\mathrm d}\x \approx  15.47\hspace*{3mm}
\end{equation}
using the same functions as before.

%--------------
\subsubsection*{Example 4}
As the last example, consider the integral of the sum of the three
pdfs
\begin{align}
\hspace*{-3mm}\mu &= \int_{\frac{3}{2\pi}}^{\pi} 30 \frac{\x}{4.82082}+ 30 \frac{\left(x^2-\frac{\x}{\pi }\right)}{8.76463} + 40 \frac{\sin (\x)}{1.88816} {\mathrm d}\x \\ \nonumber
& \approx  100.
\end{align}
In this case we know the optimal (zero variance) $\alpha$ values:
$(0.3,0.3,0.4)$. This case should be most favorable to equal count of
samples.
\subsubsection*{Example 5}
As the last example, consider solving the following integral 
\begin{eqnarray}
\hspace*{-3mm}\mu  =  \int_{0.01}^{\pi/2} \left(\sqrt{\x} + \sin{\x} \right){\mathrm d}\x \approx 2.31175.
\end{eqnarray}
by MIS sampling on functions $2-\x$, and $\sin^2(\x)$.

\subsection{Bounds for the variance of $F$ estimator}
\label{bounds}

In Tables~\ref{table:bounds_example1}, \ref{table:bounds_example2}, \ref{table:bounds_example3}, and \ref{table:bounds_example4}, we give the bounds for Examples 1-4 of Section~\ref{comparison}, respectively. We use the same set of values $\{ \alpha_k \}_{k=1}^n$ in column 1, but now we consider equal cost of sampling. The last column contains the real  variances, approximated numerically with high precision. The second, fourth and fifth column contain the upper bounds. B1 is the upper bound based on the weighted harmonic mean of Eq.~\eqref{bound21}. B2 is the upper bound based on the weighted arithmetic mean in Eq.~\eqref{bound11}. Finally, B3 is the upper bound based on the weighted power mean in Eq.~\eqref{bound31}. For the sake of comparison we have included in column 2 and 4 the corresponding means of $\{v_i\}$ weighted with the $\{\alpha_i\}$ (B2 is also the arithmetic mean). From these four tables we can extract the following conclusions:
\begin{itemize}
\item {As expected, the listed bound values are indeed upper bounds for the variances.}
\item {None of the bounds is always the tightest.}
\item{B3 is always the tightest bound except in one case (second row of Example 3)}
\item{In examples 3 and 4, the bounds are much less tight, while in Example 2 the bounds are very tight.}
\item{The weighted harmonic mean is in most cases tighter than the bounds. However, we recall that it is not a bound, as we observe in Example 4.}
\end{itemize}

\begin{table*}
\caption{\label{table:bounds_example1} Upper bounds for the variances of $F$ estimator for Example 1 
%with equal costs (1,1,1).
}
\begin{center}
  {\footnotesize
\begin{tabular}{ |c|c|c|c|c|c|c|}
 \hline
&B1&harmonic mean&B2 (arithmetic mean) & B3&power mean (-1/2)&variance\\
\hline
\hline
 $\alpha_k \propto \frac{1}{n}$ &59,8863&33,6961&53,7493&46,4125&36,767&29,1634\\
$\alpha_k \propto \frac{1}{v_k}$&34,2727&27,0116&33,6961&30,876&27,7974&24,1116\\
 $\alpha_k \propto \frac{1}{m^2_k}$&47,7525&30,4426&45,0066&39,271&32,4148&26,5536\\
$\alpha_k \propto \sigma_{k,eq}$ &62,3836&34,4548&55,4324&47,8705&37,7497&29,0908\\
$\alpha_k \propto M_{k,eq}$&56,2199&32,6544&51,2191&44,2741&35,3942&28,2435\\
 \hline
\end{tabular}
}
\end{center}
\end{table*}

\begin{table*}
\caption{\label{table:bounds_example2} Upper bounds for the variances of $F$ estimator for Example 2 
%with equal costs (1,1,1).
}
\begin{center}
  {\footnotesize
\begin{tabular}{ |c|c|c|c|c|c|c|}
 \hline
&B1&harmonic mean&B2 (arithmetic mean) & B3&power mean (-1/2)&variance\\
\hline
\hline
 $\alpha_k \propto \frac{1}{n}$ &6,96851&5,9558&6,53264&6,36347&6,08435&4,9176\\
$\alpha_k \propto \frac{1}{v_k}$&6,25335&5,52328&5,9558&5,82562&5,61376&4,5528\\
 $\alpha_k \propto \frac{1}{m^2_k}$&6,70447&5,79234&6,32076&6,16385&5,90726&4,7754\\
$\alpha_k \propto \sigma_{k,eq}$ &7,05992&6,02603&6,61134&6,44027&6,1577&4,9992\\
$\alpha_k \propto M_{k,eq}$&6,85368&5,87468&6,43664&6,27103&5,99849&4,8324\\
 \hline
\end{tabular}
}
\end{center}
\end{table*}

\begin{table*}
\caption{\label{table:bounds_example3} Upper bounds for the variances of $F$ estimator for Example 3 
%with equal costs (1,1,1).
}
\begin{center}
  {\footnotesize
\begin{tabular}{ |c|c|c|c|c|c|c|}
 \hline
&B1&harmonic mean&B2 (arithmetic mean) & B3&power mean (-1/2)&variance\\
\hline
\hline
 $\alpha_k \propto \frac{1}{n}$ &355,59&11,0158&3208,72&213,213&19,9094&10,6877\\
$\alpha_k \propto \frac{1}{v_k}$&25,7535&4,51631&11,0158&15,9986&4,72888&2,02066\\
 $\alpha_k \propto \frac{1}{m^2_k}$&148,369&7,10414&142,853&66,8855&8,82679&0,368009\\
$\alpha_k \propto \sigma_{k,eq}$ &308,038&10,0157&2942,08&189,329&17,2153&9,48229\\
$\alpha_k \propto M_{k,eq}$&330,312&10,6197&2769,89&188,646&18,1804&7,05337\\
 \hline
\end{tabular}
}
\end{center}
\end{table*}

\begin{table*}
\caption{\label{table:bounds_example4} Upper bounds for the variances of $F$ estimator for Example 4 
%with equal costs (1,1,1).
}
\begin{center}
  {\footnotesize
\begin{tabular}{ |c|c|c|c|c|c|c|}
 \hline
&B1&harmonic mean&B2 (arithmetic mean) & B3&power mean (-1/2)&variance\\
\hline
\hline
 $\alpha_k \propto \frac{1}{n}$ &18463,3&814,05&57587,8&11878,3&1646,94&28,1431\\
$\alpha_k \propto \frac{1}{v_k}$&685,56&294,421&814,05&573,436&302,401&330,852\\
 $\alpha_k \propto \frac{1}{m^2_k}$&6335,38&459,805&8518,29&4107,74&620,166&809,287\\
$\alpha_k \propto \sigma_{k,eq}$ &15780,7&733,625&52772,4&10455,1&1398,07&46,618\\
$\alpha_k \propto M_{k,eq}$&18681,5&819,662&59442,8&12108,1&1674,84&18,0465\\
 \hline
\end{tabular}
}
\end{center}
\end{table*}

%%%%%%%%%%%%%%%%%%%%%%%%%

\section{Conclusions}
\label{sec:conclusion}
%%%%%%%%%%%%%%%%%%%%%%%%%
In this paper, we have proposed a multiple importance sampling estimator that combines samples simulated from different techniques. The novel estimator generalizes the balance heuristic estimator, widely used Monte Carlo in the literature of signal processing, computational statistics, and computer graphics. In particular, this estimator relaxes the connection between the coefficients that select the number of samples per proposal, and the samples that appear in the mixture of techniques at the denominator of the importance weight. This flexibility shows a relevant improvement in terms of variance in the combined estimator w.r.t. the balance heuristic estimator (which is include as a particular case in the novel estimator). We have studied the optimal choice of the free coefficients in such a way the variance of the resulting estimator is minimized. In addition, numerical results have shown that the significant gap in terms of variance between both estimators justifies the use of the novel estimator whenever possible. We have also presented novel bounds for the variance of the balance heuristic estimator.

\appendix
%\section{Appendices}

%%%%%%%%%%%%%%%%%%%%%%%%%%%%%%%%%%%
\section*{Appendix A: difference between the variances of deterministic and randomized multiple importance sampling estimators}
The difference between the variances of the deterministic multiple importance sampling estimator, $Z$, and the randomized one, ${\cal Z}$, is given by \cite{Veach:PhD} (we normalize here  to one sample)
\begin{eqnarray}
\label{upperbounddifference1}
V[{\cal Z}^1] - V[Z^1] &=& \sum_i \alpha_i {\mu'}_i^2 - \mu^2 \\ \nonumber
&=&  \sum_i \alpha_i {\mu'}_i^2 - (\sum_i \alpha_i {\mu'}_i)^2
\end{eqnarray}
As by Cauchy-Schwartz
\begin{eqnarray}
\label{upperbounddifference2}
(\sum_i \alpha_i {\mu'}_i)^2 \le (\sum_i \alpha_i) (\sum_i \alpha_i {\mu'}_i^2),
\end{eqnarray}
equality only happens (apart from the case when both variances $V[Z^1],V[{\cal Z}^1]$ are zero) when for all $i$, $\alpha_i \propto\alpha_i {\mu'}_i^2$, i.e., when all ${\mu'}_i$ are equal. Observe that we can write
\begin{eqnarray}
\label{upperbounddifference3}
{\cal A}^2 ({\mu'}_i; \alpha_i) \le {\cal PWM}(2)^2 ({\mu'}_i; \alpha_i) 
\end{eqnarray}
where ${\cal A} ({\mu'}_i; \alpha_i)$ is the weighted arithmetic mean of $\{{\mu'}_i\}$ values with weights $\{\alpha_i\}$, and ${\cal PWM}(2)$ is the power mean with power 2 (observe that arithmetic mean is the power mean with power 1).
When for all $i$, $\alpha_i = 1/n$,  
\begin{eqnarray}
\label{upperbounddifference4}
 \mu^2 = {\cal PWM}(2)^2 ({\mu'}_i; 1/n) \le n {\cal A}^2 ({\mu'}_i; 1/n) = n \mu^2.
\end{eqnarray}
Thus when $\alpha_i = 1/n$,
\begin{eqnarray}
\label{upperbounddifference5}
V[{\cal Z}^1] - V[Z^1] \le (n-1) \mu^2. 
\end{eqnarray}

\section*{Appendix B. Proof of Theorem \ref{theorem_variance_F}: Optimal variance of $F$}
 The $\{\alpha_i\}_{i=1}^n$ values for the optimal variance of $F$ estimator can be obtained using Lagrange multipliers with the target function $$\Lambda (\{\alpha_i\}_{i=1}^n, \lambda)= \sum_{i=1}^n \alpha_i {\sigma'}_i^2 + \lambda \left(\sum_{i=1}^n \alpha_i =1\right).$$
 Taking partial derivatives with respect to {$\alpha_j$},
  \begin{eqnarray}
 \label{optimallagrangeF1}
 \frac {\partial \Lambda (\{\alpha_i\}_{i=1}^n, \lambda)} {\partial_{\alpha_j}} &=&  \frac {\partial \left( \sum_{i=1}^n \alpha_i {\sigma'}_i^2 \right)} {\partial_{\alpha_j}}     +  \frac {\partial  \left( \lambda \left(\sum_{i=1}^{n} \alpha_i - 1 \right) \right)} {\partial_{\alpha_j}} \\ \nonumber
 &=&   \sum_{i=1}^n \frac  {\partial \left( \alpha_i {\sigma'}_i^2 \right)} {\partial_{\alpha_j}}        +  \lambda   = 0.
 \end{eqnarray}
 The partial derivatives are equal to
  \begin{eqnarray}
 \label{optimallagrangeF2}
  \frac  {\partial \left( \alpha_i {\sigma'}_i^2 \right)} {\partial_{\alpha_j}} =  \delta_{ij} {\sigma'}_j^2 +\alpha_i \frac  {\partial \left( {\sigma'}_i^2 \right)} {\partial_{\alpha_j}}.
 \end{eqnarray}
where $\delta_{ij}$ is Dirac's delta function, and
 \begin{eqnarray}
\label{betafixedproof3}
\frac  {\partial \left( {\sigma'}_i^2 \right)} {\partial_{\alpha_j}} &=& \frac  {\partial \left( \int \frac{f^2(\x) p_i(\x)}{(\sum_{k=1}^n \alpha_k p_k(\x))^2} d\x -  ( {\mu'_i})^2 \right)} {\partial_{\alpha_j}} \\ \nonumber
&=&  \frac  {\partial \left( \int \frac{f^2(\x) p_i(\x)}{(\sum_{k=1}^n \alpha_k p_k(\x))^2} d\x \right)} {\partial_{\alpha_j}} - 2 {\mu'_i} \frac  {\partial {\mu'_i}}{\partial_{\alpha_j}} \\ \nonumber
&=&  -2 \int \frac{f^2(\x) p_i(\x) p_j(\x)}{(\sum_{k=1}^n \alpha_k p_k(\x))^3} d\x - 2 {\mu'_i} \frac  {\partial {\mu'_i}}{\partial_{\alpha_j}}.
\end{eqnarray}
Since we can write
 \begin{eqnarray}
\label{betafixedproof4}
\frac  {\partial {\mu'_i}}{\partial_{\alpha_j}} &=& \frac  {\partial \left( \int \frac{f(\x) p_i(\x)}{\sum_{k=1}^n \alpha_k p_k(\x)}d\x \right)}{\partial_{\alpha_j}} \\ \nonumber
&=& - \int \frac{f(\x) p_i(\x) p_j(\x)}{(\sum_{k=1}^n \alpha_k p_k(\x))^2}d\x,
\end{eqnarray}
% \cblue{Using the result in Eq.~\eqref{betafixedproof5}, we obtain,}
thus Eq.~\eqref{betafixedproof3} reads
 \begin{eqnarray}
\label{betafixedproof5}
 \frac  {\partial \left({\sigma'}_i^2 \right)} {\partial_{\alpha_j}} &=& -2 \int \frac{f^2(\x) p_i(\x) p_j(\x)}{(\sum_{k=1}^n \alpha_k p_k(\x))^3} d\x \\ \nonumber
 &+& 2 {\mu'_i} \int \frac{f(\x) p_i(\x) p_j(\x)}{(\sum_{k=1}^n \alpha_k p_k(\x))^2}d\x.
\end{eqnarray}
Then,
  \begin{eqnarray}
 \label{optimallagrangeF3}
& &\lambda =  - \sum_{i=1}^n \frac  {\partial \left( \alpha_i {\sigma'}_i^2 \right)} {\partial_{\alpha_j}} =  -{\sigma'}_j^2  + 2  \sum_{i=1}^n  \alpha_i \times\\ \nonumber
 & &  \left( \int \frac{f^2(\x) p_i(\x) p_j(\x)}{(\sum_{k=1}^n \alpha_k p_k(\x))^3} d\x
  -  {\mu'_i} \int \frac{f(\x) p_i(\x) p_j(\x)}{(\sum_{k=1}^n \alpha_k p_k(\x))^2}d\x \right) \\ \nonumber
 &=& -{\sigma'}_j^2 + 2 \int \frac{f^2(\x) p_j(\x) (\sum_{i=1}^n \alpha_i p_i(\x)) }{(\sum_{k=1}^n \alpha_k p_k(\x))^3} d\x \\ \nonumber
  &-& 2 \sum_{i=1}^n \alpha_i {\mu'_i} \int \frac{f(\x) p_i(\x) p_j(\x)}{(\sum_{k=1}^n \alpha_k p_k(\x))^2}d\x \\ \nonumber
  &=& - {\sigma'}_j^2  +2 ( {\sigma'}_j^2 + {\mu'_j}^2)
 - 2 \sum_{i=1}^n \alpha_i {\mu'_i} \int \frac{f(\x) p_i(\x) p_j(\x)}{(\sum_{k=1}^n \alpha_k p_k(\x))^2}d\x \\ \nonumber
  &=& {\sigma'}_j^2 +2 {\mu'_j}^2
 - 2\sum_{i=1}^n \alpha_i {\mu'_i} \int \frac{f(\x) p_i(\x) p_j(\x)}{(\sum_{k=1}^n \alpha_k p_k(\x))^2}d\x \\ \nonumber
 \end{eqnarray}
 This is, for all $j$, the following values have to be equal,
   \begin{eqnarray}
 \label{optimallagrangeF3bis}
 {\sigma'}_j^2 +2 {\mu'_j}^2
 - 2\sum_{i=1}^n \alpha_i {\mu'_i} \int \frac{f(\x) p_i(\x) p_j(\x)}{(\sum_{k=1}^n \alpha_k p_k(\x))^2}d\x
 \end{eqnarray}
Multiplying by $\alpha_j$ and adding over all $j$ in Eq.~\eqref{optimallagrangeF3},
  \begin{eqnarray}
 \label{optimallagrangeF4}
  \lambda &=&  \sum_{j=1}^n \alpha_j \lambda = \sum_{j=1}^n \alpha_j   {\sigma'}_j^2 +2  \sum_{j=1}^n \alpha_j {\mu'_j}^2 \\ \nonumber
  &-& 2 \sum_{i=1}^n \alpha_i {\mu'_i} \int \frac{f(\x) p_i(\x) (\sum_{j=1}^n \alpha_j p_j(\x))}{(\sum_{k=1}^n \alpha_k p_k(\x))^2}d\x \\ \nonumber
  &=&  \sum_{j=1}^n \alpha_j   {\sigma'}_j^2 +2  \sum_{j=1}^n \alpha_j {\mu'_j}^2 - 2 \sum_{i=1}^n \alpha_i {\mu'_i}^2  \\ \nonumber
  &=& \sum_{j=1}^n \alpha_j   {\sigma'}_j^2
 \end{eqnarray} 
 which is the optimal variance of estimator {$F$}.  Observe that all derivatives in~\eqref{optimallagrangeF1} are negative for the optimal $\{\alpha_i^{\star}\}$ values and equal to $-\sum_{j=1}^n \alpha_j^{\star}   {\sigma^{'\star}}_j^2$.  Let us compare Eq.~\eqref{optimallagrangeF3bis} with the condition of minimum variance for estimator ${\cal{F}}$, the randomized version of $F$, which turns into equality for all $j$ of the sum~(\cite{SbertHavranSzirmay2016})
  \begin{eqnarray}
 \label{optimalrandomizedF1}
{\sigma'}_j^2 + {\mu'_j}^2. 
  \end{eqnarray} Observe that, when for given $\{\alpha_i\}$ for all $i$ the $\{\mu'_i\}$ values are equal (and thus equal to $\mu$), the Eq.~\eqref{optimallagrangeF3bis} and Eq.~\eqref{optimalrandomizedF1} become the same, for these $\{\alpha_i\}$ to be optimal all values  $\{{\sigma'}_i\}$ have to be equal too, and both minimum variances of $F$ and ${\cal{F}}$ are equal.

\section*{Appendix C: upper bounds for balance heuristic estimator}
The variance of {random mixture} generic MIS estimator  is given by
\begin{eqnarray}
\label{newbound1}
V[{\cal{Z}}] = \sum_{i=1}^{n} \int \frac{w_i^2(\x) f^2(\x)}{\alpha_i p_i(\x)} {\mathrm d}\x - \mu^2.
\end{eqnarray}
with optimal $w_i(\x)$ weights the balance heuristic weights (see \cite[Theorems 9.2,9.4]{Veach:PhD}). This is, the optimal case is for ${\cal{Z}} \equiv {\cal{F}}$. Let us consider now lineal combination of {the resulting estimators of the $n$ techniques $\{p_i(\x)\}_{i=1}^n$.} In this case, all weights $w_i(\x)\equiv w_i$ are constant in all domain. Let ${\cal{Z}}_{l1}$ be the randomized version of the estimator $Z_{l1}$, that is the optimal deterministic lineal combination for $\alpha_i$ fixed. The optimal weights are $w_i = \frac{H(v_i/\alpha_i) \alpha_i}{n v_i}$, and $V[Z_{l1}]$ is equal to $\frac{H(v_i/\alpha_i)}{n}$, where $v_i$ is the variance of technique $i$, see \cite{SbertHavran2017}. In order to find $V[{\cal{Z}}_{l1}]$, we substitute the optimal weights in Eq.~\eqref{newbound1} to obtain, {supposing all techniques are unbiased},
\begin{eqnarray}
\label{newbound2}
V[{\cal{Z}}_{l1}] &=&  \sum_{i=1}^{n} \int \frac{  (\frac{H(v_i/\alpha_i) \alpha_i}{n v_i}) ^2      f^2(\x)}{\alpha_i p_i(\x)} {\mathrm d}\x - \mu^2 \\ \nonumber
%\int \frac {( H(v_i/ \alpha_i) \alpha_i}{n v_i} )^2  f^2(\x)} {\alpha_i p_i(\x)} {\mathrm d}x - \mu^2 \\ \nonumber
&=& \frac{(H(v_i/ \alpha_i))^2}{n}  \sum_{i=1}^{n}  \frac{\alpha_i}{n v_i^2} \int  \frac{f^2(\x)} { p_i(\x)} {\mathrm d}\x - \mu^2 \\ \nonumber
&=& \frac{(H(v_i/ \alpha_i))^2}{n}  \left( \sum_{i=1}^{n}  \frac{\alpha_i}{n v_i^2} (v_i+\mu^2) \right) - \mu^2 \\ \nonumber
&=& \frac{(H(v_i/ \alpha_i))^2}{n}  \left( \sum_{i=1}^{n}  \frac{\alpha_i}{n v_i}+\mu^2 \sum_{i=1}^{n}  \frac{\alpha_i}{n v_i^2} \right)- \mu^2 \\ \nonumber
&=& \frac{(H(v_i/ \alpha_i))^2}{n}  \left( \frac{1}{H(v_i/ \alpha_i)} +\mu^2 \frac{1}{H(v_i^2/ \alpha_i)} \right)- \mu^2 \\ \nonumber
&=& \frac{H(v_i/\alpha_i)}{n} + \mu^2 \left( \frac{(H(v_i/\alpha_i))^2}{n H(v_i^2/\alpha_i)} -1 \right).
\end{eqnarray}
Thus by Theorem 9.4 in \cite{Veach:PhD}, $V[{\cal{Z}}_{l1}]$ in Eq.~\eqref{newbound2} is an upper bound for the variance of the random balance heuristic mixture estimator $V[\cal{F}]$, which in its turn is an upper bound for the deterministic balance heuristic mixture estimator, $V[{F}]$. The bound can be written in terms of weighted harmonic mean
\begin{eqnarray}
\label{newbound3}
V[{\cal{Z}}_{l1}]= {\cal{H}} (v_i;\alpha_i)  + \mu^2 \left( \frac{({\cal{H}}(v_i;\alpha_i))^2}{{\cal{H}}(v_i^2;\alpha_i)} -1 \right),
\end{eqnarray}
where ${\cal{H}} (v_i;\alpha_i)$ is the weighted harmonic mean of $\{v_i\}$ with weights $\{\alpha_i\}$.  Compare with the bound ${\cal{A}} (v_i;\alpha_i)$, the weighted arithmetic mean, obtained in \cite{SbertHavran2017}.\\
In case the techniques are biased, the bound would be, continuing from second line in Eq.~\eqref{newbound2},
\begin{eqnarray}
\label{newbound4}
V[{\cal{Z}}_{l1}] &=& \frac{(H(v_i/ \alpha_i))^2}{n}  \left( \sum_{i=1}^{n}  \frac{\alpha_i}{n v_i^2} (v_i+\mu_i^2) \right) - \mu^2 \\ \nonumber
&=& \frac{(H(v_i/ \alpha_i))^2}{n}  \left( \sum_{i=1}^{n}  \frac{\alpha_i}{n v_i}+\sum_{i=1}^{n}  \frac{\alpha_i \mu_i^2 }{n v_i^2} \right)- \mu^2 \\ \nonumber
&=& \frac{(H(v_i/ \alpha_i))^2}{n}  \left( \frac{1}{H(v_i/ \alpha_i)} + \frac{1}{H(\frac{v_i^2}{ \mu_i^2} /\alpha_i)} \right)- \mu^2 \\ \nonumber
&=& \frac{H(v_i/\alpha_i)}{n} +  \frac{(H(v_i/\alpha_i))^2}{n H(\frac{v_i^2}{ \mu_i^2} /\alpha_i)} -\mu^2 \\ \nonumber
&=& {\cal{H}} (v_i;\alpha_i)  + \frac{({\cal{H}}(v_i;\alpha_i))^2}{{\cal{H}}(\frac{v_i^2}{\mu_i^2};\alpha_i)} -\mu^2,
\end{eqnarray}
where $\mu_i$ is the expected value corresponding to technique $i$, and such that $\sum_i w_i \mu_i = \mu$.\\
Using other combinations of values for $w_i$ and $\alpha_i$ in the randomized linear combination of estimators we can obtain other bounds. The most interesting cases are when for all $i$, $w_i=\alpha_i$, which we call estimator $Z_{l2}$, and the optimal case when $w_i$ is fixed, which corresponds to $\alpha_i \propto w_i \sqrt{v_i}$ (\cite{SbertHavran2017}), which we call estimator $Z_{l3}$. In the first case, when for all $i$, $w_i=\alpha_i$, the variance of the randomized linear combination estimator ${\cal{Z}}_{l2}$ (and so an upper bound for the variance of balance heuristic estimators ${\cal{F}}$ and $F$), it is found to be
\begin{eqnarray}
\label{newbound5}
V[{\cal{Z}}_{l2}] &=& \sum_{i=1}^{n} \int \frac{\alpha_i f^2(\x)}{ p_i(\x)} {\mathrm d}\x - \mu^2 \\ \nonumber
&=& \sum_{i=1}^{n} \alpha_i \int \frac{ f^2(\x)}{ p_i(\x)} {\mathrm d}\x - \mu^2 \\ \nonumber
&=& \sum_{i=1}^{n} \alpha_i (v_i+ \mu^2) - \mu^2 \\ \nonumber
&=& \sum_{i=1}^{n} \alpha_i v_i \\ \nonumber
&=& {\cal{A}} (v_i;\alpha_i).
\end{eqnarray}
where ${\cal{A}} (v_i;\alpha_i)$ is the weighted arithmetic mean of $\{v_i\}$ with weights $\{\alpha_i\}$. This bound was already obtained in \cite{SbertHavran2017}. Observe also that in this case, the variances of randomized and deterministic  (\cite{SbertHavran2017}) linear combination are the same. This happens because being for all $i$, $w_i=\alpha_i$, all $\alpha_i {\mu'}_i$ values, with ${\mu'}_i = \frac{1}{\alpha_i} \int w_i(\x) f(\x) d\x$, Eq.~\eqref{differenceofZ2}, are the same, see Section~\ref{preliminaries} and Appendix A. For the case of biased techniques, the variance is
\begin{eqnarray}
\label{newbound5biased}
V[{\cal{Z}}_{l2}]= {\cal{A}} (v_i;\alpha_i)+ {\cal{A}} (\mu_i^2;\alpha_i)-\mu^2.
\end{eqnarray}\\
For the case $\alpha_i \propto w_i \sqrt{v_i}$, for the randomized estimator ${\cal{Z}}_{l3}$, we isolate $w_i$ so that 
$w_i = \frac{ \frac {\alpha_i} {\sqrt{v_i}}}{ \sum_k \frac {\alpha_k}{\sqrt{v_k}} } = \frac{ H(\frac {\sqrt{v_i}}{{\alpha_i}}) \frac {\alpha_i} {\sqrt{v_i}}}{ n} $ and substituting the $w_i$ values in Eq.~\eqref{newbound1} the variance is found to be
\begin{eqnarray}
\label{newbound6}
V[{\cal{Z}}_{l3}] &=&\left( \frac{H(\frac {\sqrt{v_i}}{{\alpha_i}})}{n}\right)^2 \sum_{i=1}^{n} \frac{ \alpha_i}{v_i } \int \frac{ f^2(\x)}{ p_i(\x)} {\mathrm d}\x - \mu^2\\ \nonumber
&=& \left( \frac{H(\frac {\sqrt{v_i}}{{\alpha_i}})}{n}\right)^2 \sum_{i=1}^{n} \frac{ \alpha_i}{v_i } (v_i+\mu^2) - \mu^2\\ \nonumber
&=& \left( \frac{H(\frac {\sqrt{v_i}}{{\alpha_i}})}{n}\right)^2 (1 + \frac{\mu^2 n} { H(\frac {v_i}{{\alpha_i}})}) - \mu^2\\ \nonumber
&=& \left({\cal{H}} (\sqrt{v_i};\alpha_i)\right)^2  + 
\mu^2 \left( \frac{\left({\cal{H}} (\sqrt{v_i};\alpha_i)\right)^2}{{\cal{H}}(v_i;\alpha_i)} -1 \right),
\end{eqnarray}
and for the case of biased techniques,
\begin{eqnarray}
\label{newbound7}
V[{\cal{Z}}_{l3}]= \left({\cal{H}} (\sqrt{v_i};\alpha_i)\right)^2  + 
\frac{\left({\cal{H}} (\sqrt{v_i};\alpha_i)\right)^2}{{\cal{H}}(\frac{v_i}{\mu_i^2};\alpha_i)} -\mu^2.
\end{eqnarray}
Observe that $\left({\cal{H}} (\sqrt{v_i};\alpha_i)\right)^2$ is the power mean, with power $-1/2$, of $\{v_i\}$ with weights $\{\alpha_i\}$. Remembering that harmonic mean is power mean with power $-1$, and arithmetic mean is power mean with power $1$, by the increasing property of the power mean function we can write,
\begin{eqnarray}
\label{newbound8}
{\cal{H}} (v_i;\alpha_i) \le \left({\cal{H}} (\sqrt{v_i};\alpha_i)\right)^2 \le {\cal{A}} (v_i;\alpha_i).
\end{eqnarray}
Although it is reasonable to expect that the three bounds in Eqs.~\eqref{newbound3}, \eqref{newbound6}, and \eqref{newbound5} follow the same order, we can not state this in general, as there might be cases where the second terms in Eqs.~\eqref{newbound3} and \eqref{newbound6} would make the bound result in a bigger bound than the one in Eq.~\eqref{newbound5}.

 Observe now that the three bounds  in Eqs.~\eqref{newbound3}, \eqref{newbound5}, and \eqref{newbound6} come from taking
 $w_i \propto \frac{\alpha_i}{v_i^t}$, with $t=1, t=0$, and $t=1/2$, respectively. By normalizing we have $w_i = {\cal{H}} (v_i^t;\alpha_i) \frac{\alpha_i}{v_i^t}$, and substituting into Eq.~\eqref{newbound1}  we obtain, for any $t$, when all techniques are unbiased the upper bound for  $V[F^1]$

\begin{equation}
\label{appendix_generalized_bound}
 \frac{\left({\cal{H}} ({v_i}^t;\alpha_i)\right)^2}{{\cal{H}} ({v_i}^{2t-1};\alpha_i)}  + 
\mu^2 \left( \frac{\left({\cal{H}} ({v_i}^t;\alpha_i)\right)^2}{{\cal{H}}(v_i^{2t};\alpha_i)} -1 \right).
\end{equation}
And for biased techniques $V[F^1]$ is upper bounded by
\begin{equation}
\label{appendix_generalized_bound_biased}
 \frac{\left({\cal{H}} ({v_i}^t;\alpha_i)\right)^2}{{\cal{H}} ({v_i}^{2t-1};\alpha_i)}  + 
\frac{\left({\cal{H}} ({v_i}^t;\alpha_i)\right)^2}{{\cal{H}}(\frac{v_i^{2t}}{\mu_i^2};\alpha_i)} -\mu^2.
\end{equation}

  \section*{Appendix D: Derivation of Case 3}
We present here the proof of Eq.~\eqref{betafixed1}. 
We have to optimize the target function $$\Lambda (\{\alpha_i\}_{i=1}^n, \lambda)= \sum_{i=1}^n \frac{\alpha_i^2 {\sigma'}_i^2}{\beta_i} + \lambda \left(\sum_{i=1}^n \alpha_i - 1\right).$$

Taking partial derivatives with respect to $\alpha_i$, as the $\beta_i$ values are constant,
 \begin{eqnarray}
\label{betafixedproof1}
\frac {\partial \Lambda (\{\alpha_i\}_{i=1}^n, \lambda)} {\partial_{\alpha_j}} 
=   \sum_{i=1}^n \frac{1}{\beta_i} \frac  {\partial \left( \alpha_i^2 {\sigma'}_i^2 \right)} {\partial_{\alpha_j}}        +  \lambda   = 0.
\end{eqnarray}
The partial derivatives are equal to
 \begin{eqnarray}
\label{betafixedproof2}
\frac  {\partial \left( \alpha_i^2 {\sigma'}_i^2 \right)} {\partial_{\alpha_j}} =  2 \alpha_j \delta_{ij} {\sigma'}_j^2 +\alpha_i^2 \frac  {\partial \left( {\sigma'}_i^2 \right)} {\partial_{\alpha_j}},
\end{eqnarray}
where $\delta_{ij}$ is Dirac's delta function. Using the result in Eq.~\eqref{betafixedproof5}, we obtain
 \begin{eqnarray}
\label{betafixedproof6}
& &   \sum_{i=1}^n \frac{1}{\beta_i} \frac  {\partial \left( \alpha_i^2 {\sigma'}_i^2 \right)} {\partial_{\alpha_j}} = 2 \frac{\alpha_j {\sigma'}_j^2}{\beta_j}  - 2  \sum_{i=1}^n  \frac{\alpha_i^2}{\beta_i} \\ \nonumber
&\times&  \left( \int \frac{f^2(\x) p_i(\x) p_j(\x)}{(\sum_{k=1}^n \alpha_k p_k(\x))^3} d\x
 - {\mu'_i} \int \frac{f(\x) p_i(\x) p_j(\x)}{(\sum_{k=1}^n \alpha_k p_k(\x))^2}d\x \right) \\ \nonumber
&=& -\lambda.
\end{eqnarray}
In Eq.~\eqref{betafixedproof6}, we multiply by $\alpha_j$, and add over all indexes $j$, obtaining
 \begin{eqnarray}
\label{betafixedproof7}
 \lambda &=& - 2 \sum_{j=1}^n \frac{\alpha_j^2 {\sigma'}_j^2}{\beta_j}   + 2  \sum_{i=1}^n  \frac{\alpha_i^2}{\beta_i}  \nonumber  \\
&\times&  \big( \int \frac{f^2(\x) p_i(\x) (\sum_{j=1}^n \alpha_j p_j(\x))}{(\sum_{k=1}^n \alpha_k p_k(\x))^3} d\x \nonumber \\ 
&-& {\mu'_i} \int \frac{f(\x) p_i(\x) (\sum_{j=1}^n \alpha_j p_j(\x))}{(\sum_{k=1}^n \alpha_k p_k(\x))^2}d\x \big)  \nonumber  \\
&=& - 2 \sum_{j=1}^n \frac{\alpha_j^2 {\sigma'}_j^2}{\beta_j}   + 2  \sum_{i=1}^n   \frac{\alpha_i^2}{\beta_i} \nonumber  \\ 
&\times&  \left( \int \frac{f^2(\x) p_i(\x) }{(\sum_{k=1}^n \alpha_k p_k(\x))^2} d\x
- {\mu'_i} \int \frac{f(\x) p_i(\x) }{(\sum_{k=1}^n \alpha_k p_k(\x))}d\x \right) \nonumber \\ 
&=& - 2 \sum_{j=1}^n \frac{\alpha_j^2 {\sigma'}_j^2}{\beta_j}   + 2  \sum_{i=1}^n  \frac{\alpha_i^2  {\sigma'}_i^2}{\beta_i} = 0.  
\end{eqnarray}
We remind that  $\sum_{j=1}^n \alpha_j = 1$ which disappears in the left-hand side and the second term of the right-hand side equation.

From Eq.~\eqref{betafixedproof6}, the optimal $\{ \alpha_j\}_{j=1}^n$ are those that obey
 \begin{eqnarray}
\label{betafixedproof8}
& & \frac{\alpha_j {\sigma'}_j^2}{\beta_j}   =  \sum_{i=1}^n  \frac{\alpha_i^2}{\beta_i}  \\ 
&\times&  \left( \int \frac{f^2(\x) p_i(\x) p_j(\x)}{(\sum_{k=1}^n \alpha_k p_k(\x))^3} d\x
 - {\mu'_i} \int \frac{f(\x) p_i(\x) p_j(\x)}{(\sum_{k=1}^n \alpha_k p_k(\x))^2}d\x \right).\nonumber
\end{eqnarray}
Observe that if we take for all $i$, $\alpha_i = \beta_i$ then
 \begin{eqnarray}
\label{betafixedproof9}
 {\sigma'}_j^2 &= &  \int \frac{f^2(\x) \left(\sum_{i=1}^n  \alpha_i p_i(\x) \right) p_j(\x)}{(\sum_{k=1}^n \alpha_k p_k(\x))^3} d\x  \nonumber  \\ 
&-& \sum_{i=1}^n  \alpha_i {\mu'_i} \int \frac{f(\x) p_i(\x) p_j(\x)}{(\sum_{k=1}^n \alpha_k p_k(\x))^2}d\x \nonumber \\
&= &    {\sigma'}_j^2 +  {\mu'}_j^2 
- \sum_{i=1}^n  \alpha_i {\mu'_i} \int \frac{f(\x) p_i(\x) p_j(\x)}{(\sum_{k=1}^n \alpha_k p_k(\x))^2}d\x, \nonumber
\end{eqnarray}
and thus
 \begin{eqnarray}
\label{betafixedproof10}
{\mu'}_j^2 =  \sum_{i=1}^n  \alpha_i {\mu'_i} \int \frac{f(\x) p_i(\x) p_j(\x)}{(\sum_{k=1}^n \alpha_k p_k(\x))^2}d\x.
\end{eqnarray}
Eq.~\eqref{betafixedproof10} holds when for all $i$, all ${\mu'_i}$ values are equal. Thus Eq.~\eqref{betafixedproof8} is filled  when for all $i$, $\alpha_i = \beta_i$ and all ${\mu'_i}$ are equal. This is in concordance with Theorems 9.2 and 9.4 of Veach's thesis \cite{Veach:PhD}, which deal with the optimality of the deterministic mixture estimator (Theorem 9.2) and the random mixture estimator (Theorem 9.4) MIS estimator, see Section~\ref{preliminaries}. When all ${\mu'_i}$ are equal the variances of both estimators are the same (see see Section~\ref{preliminaries}), and being the optimal by Theorem 9.4 the balance heuristic estimator, it implies that for all $i$, $\alpha_i = \beta_i$.

\bibliographystyle{Perfect}
%\bibliography{missampling_cropped}

\newpage
\begin{landscape}

\begin{table*}
\caption{ {We show the metric $E_F^{-1}=V[F]\cdot Cost[F]$ and $E_G^{-1}=V[G]\cdot Cost[G]$, i.e., the product of variance and cost for the $F$ estimator and for the optimal $G$ estimator, using both the same $\alpha_k$ values.
  For the 5 numerical examples using equal count of
  samples, count inversely proportional to the variances of
  independent estimators~\cite{vrcaiHavranSbert14},\cite{SbertHavran2017},
 and for the three estimators
  defined in \cite{Sbert2018}. The sampling costs are (1, 6.24, 3.28). In Example 5, we present the case with equal costs  (1,1), and different costs (1,5).} }
  \label{table:functions1}
  {\scriptsize
\begin{tabular}{ |c|c|c|c|c|c|c|c|c|c|c|c|c|c|c|c|c|}
 \hline
 & \multicolumn{2}{|c|}{{ \multirow{ 2}{*}{ \bf Ex. 1} }} & \multicolumn{2}{|c|}{{ \multirow{ 2}{*}{ \bf Ex. 2} }} & \multicolumn{2}{c}{{ \multirow{ 2}{*}{ \bf Ex. 3} }} & \multicolumn{2}{|c|}{{ \multirow{ 2}{*}{ \bf Ex. 4} }} & \multicolumn{2}{|c|}{{\bf Ex. 5}} & \multicolumn{2}{|c|}{{\bf Ex. 5}} \\
 \cline{10-13}
 & \multicolumn{2}{|c|}{  } & \multicolumn{2}{|c|}{  } & \multicolumn{2}{|c|}{  } & \multicolumn{2}{|c|}{  } & costs=(1,1)  &  costs=(1,1) & costs=(1,5)  &  costs=(1,5)  \\
 \cline{1-13}
 {\bf Estimator } & F& G &  F& G &  F& G & F& G   &  F & G &  F & G  \\
  \hline
  \hline
 $\alpha_k \propto \frac{1}{n}$   &102.26 &89.40 &17.24&15.44 &37.47 &31.80 &98.68&83.78  &0.28 &0.23 &0.83&0.40   \\
  $\alpha_k \propto \frac{1}{c_k v_k}$   \cite{vrcaiHavranSbert14}%,\cite{SbertHavran2017}    
  &49.53 &41.29 &9.28 &8.10  & 4.03 &3.85  & 300.12 &294.85  &0.31 &0.26 &2.76 &2.33  \\
     $\alpha_k \propto \frac{1}{c_k m^2_k}$   \cite{Sbert2018}   &54.36 &46.2 &9.82&8.49  & 3.12 &2.49 & 534.37 &449.33  &0.20 &0.15 &1.51 &1.00 \\
     $\alpha_k \propto \frac{\sigma_{k,eq}}{\sqrt{c_k}}$  \cite{Sbert2018}   &81.43 &69.88 &13.54 &11.67 &28.68  &23.17 & 91.01 &73.54  &1.00 &0.98 &2.90 &2.50  \\
   $\alpha_k \propto \frac{M_{k,eq}}{\sqrt{c_k}}$  \cite{Sbert2018} &79.73 &67.77 &13.08 &11.35 &25.74  &20.76 &31.77 &25.90  &0.29 &0.24 &2.72 &2.28  \\
 \hline
\end{tabular}
}

\end{table*}
\end{landscape}

\end{document}